\newtheorem{theorem}{Theorem}
\newtheorem{remark}{Remark}
\newcommand{\mc}[1]{\mathcal{#1}}
\newcommand{\mbb}[1]{\mathbb{#1}}
\newcommand{\mrm}[1]{\mathrm{#1}}
\newcommand{\RHinf}{\mc{RH}_{\infty}}
\newcommand{\bbC}{\mbb{C}}
\newcommand{\Set}[2]{\ensuremath{\left\{ \vphantom{#2} #1 \right. \, \left| \, \vphantom{#1} #2 \right\}}}
\let\OLDthebibliography\thebibliography
\renewcommand\thebibliography[1]{
  \OLDthebibliography{#1}
  \setlength{\parskip}{0pt}
  \setlength{\itemsep}{0pt plus 0.1ex}
}
\g@addto@macro\normalsize{%
  \setlength\abovedisplayskip{5pt plus 1pt minus 1pt}
  \setlength\belowdisplayskip{\abovedisplayskip}
  \setlength\abovedisplayshortskip{0pt plus 1pt}
  \setlength\belowdisplayshortskip{\abovedisplayskip}
}
\titlespacing{\section}{0pt}{0.5ex}{0.5ex}
\titlespacing{\subsection}{0pt}{0.5ex}{0.5ex}
\titlespacing{\subsubsection}{0pt}{1ex}{1ex}
\setlist[itemize]{nosep}
\setlist[enumerate]{nosep}
\title{Frequency Response Data Based LPV Controller Synthesis Applied to a Control Moment Gyroscope}
\begin{document}

%
%
%

\author{Tom~Bloemers,~Roland~T\'oth~and~Tom~Oomen
\thanks{T. Bloemers (corresponding author) and R. T\'oth are with the Control Systems Group, Department of Electrical Engineering, Eindhoven University of Technology. T. Oomen is with Control Systems Technology, Department of Mechanical Engineering, Eindhoven University of Technology, email: \texttt{\{t.a.h.bloemers, r.toth, t.a.e.oomen\}@tue.nl}. R. T\'oth is also with the Systems and Control Laboratory, Institute for Computer Science and Control, Kende u. 13-17, H-1111 Budapest, Hungary.
}
\thanks{This work has received funding from the European Research Council
(ERC) under the European Union's Horizon 2020 research and innovation
programme (grant agreement nr. 714663).}
}
\maketitle

\begin{abstract}
	Control of systems with operating condition-dependent dynamics, including control moment gyroscopes, often requires operating condition-dependent controllers to achieve high control performance. The aim of this paper is to develop a frequency response data-driven linear parameter-varying control design approach for single-input single-output systems, which allows improved performance for a control moment gyroscope. A stability theory using closed-loop frequency response function data is developed, which is subsequently used in a synthesis procedure that guarantees local stability and performance. Experimental results on a control moment gyroscope demonstrate the performance improvements. 
\end{abstract}


%
\IEEEpeerreviewmaketitle

\vspace{-1.5mm}
\section{Introduction}
%
%
%
%
Control of systems with operating condition-dependent dynamics, including control moment gyroscopes (CMGs), often requires operating condition-dependent controllers to achieve high control performance. CMGs are attitude control devices used, e.g., to control the attitude of spacecraft \cite{kristiansen2005comparative}. A CMG, see Figure \ref{fig:gyroscope}, consists of a rotating disk, which, when spinning, generates an angular momentum. The disk is mounted in a gimbal assembly which can rotate around multiple axes. Changing the direction of the angular momentum vector, through actuation of the gimbals, generates a gyroscopic torque \cite{wie2008space}. This torque can be used to, e.g., change the attitude of a spacecraft. The associated dynamics are nonlinear and characterized by coupled behavior and challenging rotational dynamics that change based on the operating conditions of the system. Locally, these behaviors manifest in terms of operating condition-dependent resonant dynamics, also commonly encountered in mechatronic systems \cite{toth2011lpv}. Flexible phenomena introduce severe practical limitations on the achievable performance, which become even more severe in case of operating condition-dependent dynamics. 
Achieving stability and high performance for these systems requires operating-condition dependent controllers \cite{abbas2014lpv,preda2017robust}.

The paradigm of linear parameter-varying (LPV) systems has been established to provide a systematic framework to efficiently handle operating condition-dependent nonlinear dynamics. LPV systems are characterized by a linear input-output (IO) map, whose dynamics depend on an exogenous time-varying signal. This scheduling variable $p$ can be used to capture the nonlinear or operating condition-dependent dynamics of a system. Typically, a priori information on the scheduling variable is known, such as the range of variation. LPV systems are supported by a well-developed model-based control and identification framework, with many successful applications, see \cite{hoffmann2015survey,toth2011lpv}. Model-based control techniques require an accurate parameter-dependent parametric model of the system suitable for LPV control design. In fact, obtaining such a high accuracy model is a challenging task, even for linear time-invariant (LTI) systems \cite{oomen2018advanced}.

Frequency response function (FRF) measurements enable systematic design of controllers directly from measurement data and are commonly employed in the industry \cite{oomen2018advanced}. A frequency response function estimate provides an accurate nonparametric description of the system that is relatively fast and inexpensive to obtain \cite{System_Identification_Frequency_Domain}. Also the nonparametric identification of local FRF measurements for LPV systems has been investigated in \cite{van2017accurate}, assuming that the underlying behavior is a smooth function of the scheduling variable. For the CMG, FRFs of the local dynamics can be accurately captured at a set of operating points. FRFs enable the use of classical techniques such as loop-shaping, alongside graphical tools including the Bode diagram or Nyquist plot, to design controllers \cite{maciejowski1989multivariable}. These controllers often have a proportional-integral-derivative (PID) structure in addition to higher-order filters to compensate parasitic dynamics. These methods have in common that the design procedure can be difficult as they are based on design rules, insight and experience.

Data-driven control design based on FRF measurements provides systematic approaches to design and synthesize LTI controllers. From a modeling perspective, data-driven control synthesis provides an alternative to control-oriented identification \cite{hjalmarsson2005experiment}. At first, the development of these methods have been along the lines of the classical control theory to tune PID controllers \cite{grassi2001integrated}. Later, these methods have been tailored towards more general control structures that focus on $\mathcal{H}_\infty$-performance \cite{khadraoui2014model}. The incorporation of model uncertainties into the control design enables the synthesis of stabilizing controllers that achieve sufficient robustness to account for the variations in the plant \cite{karimi2007robust,karimi2018robust}. Robust control methods are attractive to accommodate the operating condition-dependent resonant behaviors encountered in CMGs. A major drawback is a tradeoff between robustness and performance.

Including operating condition-dependent behavior in the data-driven control design framework is promising to overcome the tradeoff between robustness and performance. In \cite{formentin2016direct}, a time-domain approach is employed to identify an LPV controller such that the closed-loop mimics an ideal behavior. In \cite{kunze2007gain,karimi2013hinf,bloemers2019towards_lpv_synthesis}, frequency-domain control synthesis approaches are investigated. Common drawbacks are their limitations to stable systems only, conservative stability and performance constraints and the controller parameterization only allows for shaping of the zeros and not the poles.

Although frequency-domain data-driven controller synthesis enables powerful and systematic design approaches in the LTI framework, methods within the LPV framework are limited and conservative. The aim in this paper is to develop a data-driven LPV control design method that allows both for stable and unstable systems, applicable to an experimental CMG setup. Key steps are (i) a global LPV controller parameterization, which allows tuning of both the zeros and poles based on local information, and (ii) developing necessary and sufficient stability and performance analysis conditions.

The main contributions of this paper are
\begin{enumerate}[{C1)},left=\parindent]
	\item {\label{Contribution:1}} A procedure to synthesize LPV controllers for (possibly) unstable single-input single-output (SISO) plants from frequency-domain measurement data, with local internal stability and $\mc{H}_\infty$-performance guarantees. 
	\item {\label{Contribution:2}} Highlighting the advantages of using an LPV controller through application to an experimental CMG setup.
\end{enumerate}
This is achieved by the following sub-contributions. 
\begin{enumerate}[{C1)},left=\parindent]
\setcounter{enumi}{2}
	\item {\label{Contribution:3}} Developing of a local LPV frequency-domain stability analysis condition.
	\item {\label{Contribution:4}} Development of a local LPV frequency-domain $\mc{H}_\infty$-performance analysis condition.
\end{enumerate}
Contributions \ref{Contribution:3} and \ref{Contribution:4} are generalizations to the results presented in \cite{karimi2018robust, rantzer1994}. Specifically, when both the plant and controller are LTI the results in \cite{karimi2018robust} are recovered, and the results in \cite{rantzer1994} are recovered as a special case for stable systems. Other important differences in this paper are new insights and proofs of these sub-contributions, which establish links to the robust control theory and the B\'ezout identity. A global LPV controller parameterization in combination with \ref{Contribution:3} and \ref{Contribution:4} constitutes to \ref{Contribution:1}. Application of the developed procedures on an experimental CMG constitutes to \ref{Contribution:2}.


\textbf{Notation:} Let $\mbb{R}$ denote the set of real numbers and $\mbb{C}$ the set of complex numbers. Let $\bbC_0$ denote the imaginary axis and $\bbC_+$ the open right half-plane. The real part of a complex number $z \in \bbC$ is denoted by $\Re \{ z \}$. The set of proper, stable and real-rational transfer functions is denoted by $\RHinf$. 

\begin{remark}
	Although the theory in this paper is presented in continuous-time, a discrete-time equivalent is conceptually straightforward. Simply replace the variables $s$ with $z$, $i\omega$ with $e^{i\omega}$ and evaluate the frequencies along the unit circle instead of the imaginary axis, i.e., for the set $\Omega := \Set{\omega}{0 \leq \omega < 2\pi}$.
\end{remark}

\section{Problem formulation}
\label{section:ProblemFormulation}

\subsection{Control Moment Gyroscope}
\begin{figure}
	\begin{subfigure}[b]{.39\columnwidth}
		\centering
		\includegraphics[scale=0.06]{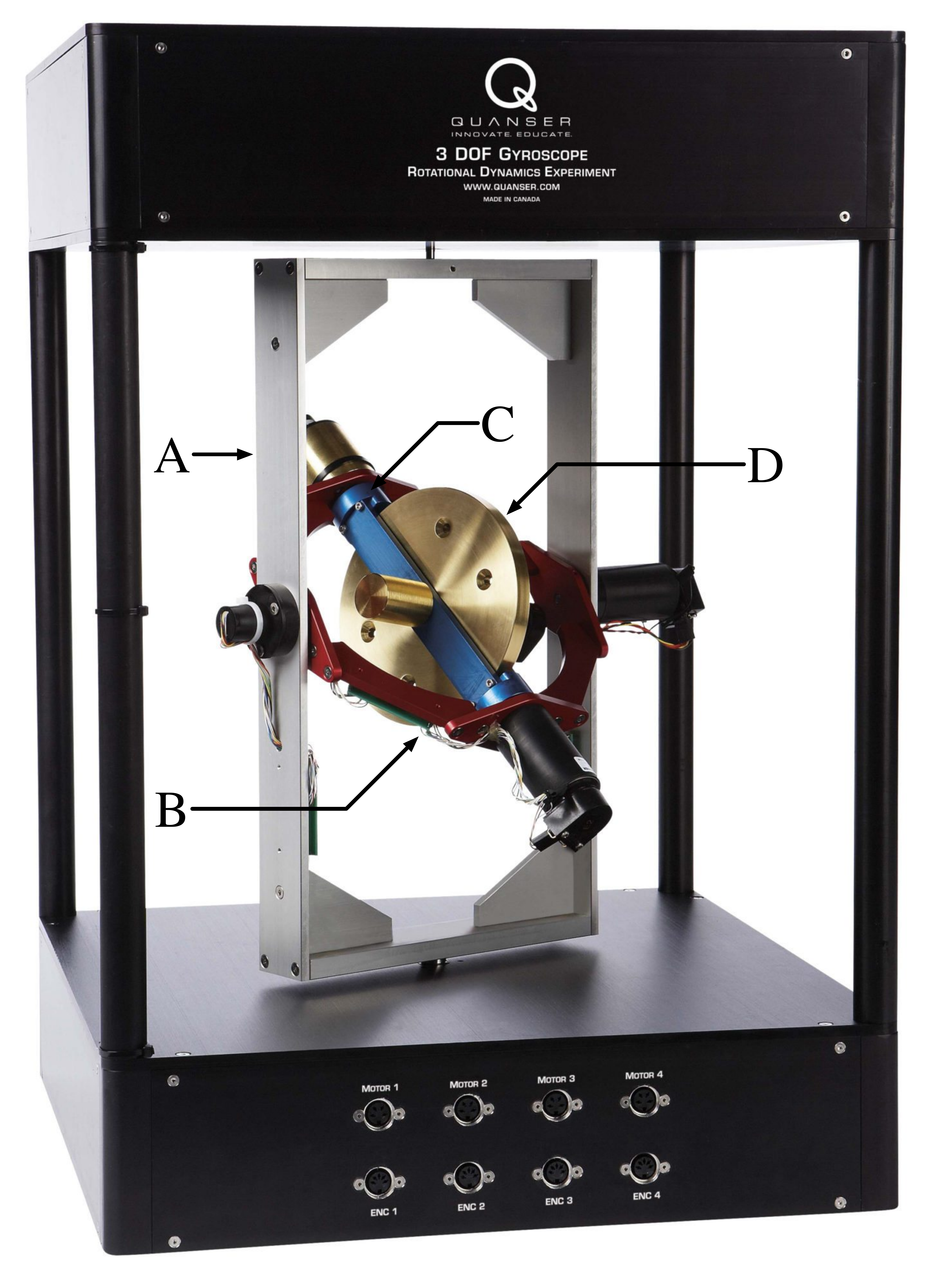}
		\caption{}
		\label{fig:gyroscope}
	\end{subfigure}
	\begin{subfigure}[b]{.60\columnwidth}
		\centering
		\includegraphics[scale=1.00]{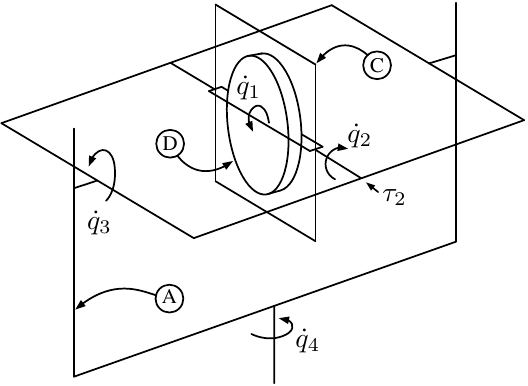}
		\caption{}	
		\label{fig:gyroscopeSchematic}
	\end{subfigure}
	\caption{Laboratory scale control moment gyroscope by Quanser (a); and schematic overview of the 3-DOF gyroscope (b).}
	\vspace{-3ex}
\end{figure}%
Figure \ref{fig:gyroscope} depicts the considered 3 degree of freedom (DOF) control moment gyroscope. It is comprised of a disk, $D$, which is mounted in a gimbal assembly consisting of three gimbals $C, B$ and $A$, corresponding to the schematic overview in Figure \ref{fig:gyroscopeSchematic}. The disk $D$ rotates with velocity $\dot{q}_1$, generating an angular momentum proportional to $\dot{q}_1$. Angle $q_2$ of gimbal $C$ is controlled through input torque $\tau_2$. Gimbal $B$ is assumed to be fixed in place such that $q_3 \equiv 0$, as depicted in Figure \ref{fig:gyroscopeSchematic}. Angle $q_4$ of gimbal $A$ is controlled through a gyroscopic torque, generated by changing angle $q_2$. As the disk tilts, a change in angular momentum causes gyroscopic torque, which is used to position gimbal $A$.

The equations of motions are of the form
\begin{equation}
	\mc{M}(q(t))\ddot{q}(t) + \mc{C}(q(t), \dot{q}(t))\dot{q}(t) = \tau_2(t),
\end{equation}
where $q^\top = \begin{bmatrix} q_1 & q_2 & q_4 \end{bmatrix}$ are the angular positions, $\tau_2$ is the input torque, $\mc{M}$ and $\mc{C}$ are the inertia and Coriolis matrices.

In the used configuration of the CMG, the goal is to control the position of gimbal $A$ by actuating gimbal $C$ through input torque $\tau_2$. 
The driving factor in this setting is the velocity of the disk $D$, which directly relates to the amount of gyroscopic torque that can be exerted on gimbal $A$. In \cite{abbas2014lpv}, it is shown that local linear approximations describe the nonlinear dynamics accurately. The aggregated collection of these local approximations is described by the following representation
\begin{subequations}
	\begin{align}
		\dot{x}(t) &= A(\dot{q}_1(t))x(t) + Bu(t), \\
		y(t) &= q_4(t),
	\end{align}
\end{subequations}
where $x^\top = \begin{bmatrix} q_4 & \dot{q}_2 & \dot{q}_4 \end{bmatrix}$ is the state, $u = \tau_2$ the input and $y = q_4$ the output. The $A$ matrix depends on the velocity of the disk, which can range anywhere in $\dot{q}_1 \in [30, 50]$ rad/s.

The local description of the behavior is in line with the availability of measurement data and the considered control synthesis techniques in the sequel. Furthermore, the dependence of the system on the disk velocity makes the LPV framework a suitable choice for modeling and control.

\subsection{LPV systems}
Consider a single-input single-output, continuous-time (CT) LPV system. The LPV state-space representation
\begin{align}
	\label{eqn:LPVss}
	G_p:
	\begin{cases}
		\dot{x}(t) &= A(p(t))x(t) + B(p(t))u(t),\\
		y(t) &= C(p(t))x(t) + D(p(t))u(t),
	\end{cases}
\end{align}
is adopted to represent the system, see also \cite{toth2010modeling}. Here, $x : \mbb{R} \rightarrow \mbb{X} \subseteq \mbb{R}^{n_\mrm{x}}$ denotes the state variable, $u : \mbb{R} \rightarrow \mbb{U} \subseteq \mbb{R}$ is the input signal, $y: \mbb{R} \rightarrow \mbb{Y} \subseteq \mbb{R}$ is the output signal and $p : \mbb{R} \rightarrow \mbb{P} \subseteq \mbb{R}^{n_\mrm{p}}$ the scheduling variable.

With a slight abuse of notation introduce
\begin{equation}
\label{eqn:LPVssShort}
	G_p = 
	\begin{pmatrix}
		\begin{array}{c|c}
			A(p) & B(p) \\ \hline
			C(p) & D(p)
		\end{array}
	\end{pmatrix}
\end{equation}
representing the LPV system with state-space form \eqref{eqn:LPVss}. If $D^{-1}(p)$ is well-defined for all $p \in \mbb{P}$, then the LPV system $G_p$ has an inverse operator
\begin{equation}
	G_p^{-1} \!=\! 
	\begin{pmatrix}
		\begin{array}{c|c}
			A(p) + B(p)D^{-1}(p)C(p) & B(p)D^{-1}(p) \\ \hline
			D^{-1}(p)C(p) & D^{-1}(p)
		\end{array}
	\end{pmatrix}, 
\end{equation}
such that $G_pG_p^{-1} = G_p^{-1}G_p = 1$ for all $p \in \mbb{P}$.

If the scheduling signal $p(t) \equiv \mrm{p}$ is constant, the scheduling-dependent matrices in \eqref{eqn:LPVssShort} become time-invariant, i.e.,
\begin{equation}
\label{eqn:LPVssFrozen}
	G_\mrm{p} = 
	\begin{pmatrix}
		\begin{array}{c|c}
			A(\mrm{p}) & B(\mrm{p}) \\ \hline
			C(\mrm{p}) & D(\mrm{p})
		\end{array}
	\end{pmatrix}
\end{equation}
represents an LTI system for constant scheduling. For a given $\mrm{p} \in \mbb{P}$, \eqref{eqn:LPVssFrozen} describes the local behavior of \eqref{eqn:LPVss}. Hence, \eqref{eqn:LPVssFrozen} is referred to as the frozen behavior of \eqref{eqn:LPVss}. Taking the Laplace transform of \eqref{eqn:LPVssFrozen} with zero initial conditions results in
\begin{equation}
\label{eqn:LPVIOFrozen}
	\hat{y}(s) = \left(C(\mrm{p})(sI - A(\mrm{p}))^{-1}B(\mrm{p}) + D(\mrm{p})\right)\hat{u}(s),
\end{equation}
where $G_\mrm{p}(s) = C(\mrm{p})(sI - A(\mrm{p}))^{-1}B(\mrm{p}) + D(\mrm{p})$ and $s$ is the Laplace variable. The frozen behavior \eqref{eqn:LPVssFrozen} also has a corresponding Fourier transform
\begin{equation}
\label{eqn:fFRF}
	Y(i\omega) = G_\mrm{p}(i\omega)U(i\omega),
\end{equation}
where $i$ is the complex unit, $\omega \in \mbb{R}$ is the frequency and $G_\mrm{p}(i\omega)$ represents the frozen Frequency Response Function (fFRF) of \eqref{eqn:LPVss} for every constant $p(t) \equiv \mrm{p} \in \mbb{P}$ \cite{schoukens2019frequency}.

\begin{figure}[t]
	\centering
	\includegraphics[scale=1.00]{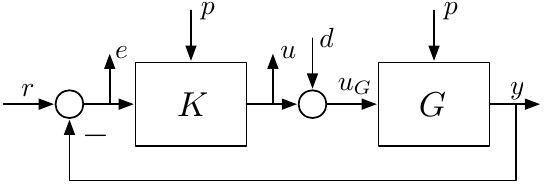}
	\caption{Typical 1 DOF feedback interconnection, including 4-block shaping problems, depending on the scheduling signal.}
	\label{fig:internal_stability}
	\vspace{-3ex}
\end{figure}

\subsection{Problem statement}
\label{subsection:problem_statement}
The problem addressed in this paper is to design an LPV controller directly from fFRF measurement data obtained from the considered CMG. We denote the data $\mc{D}_{N, \mrm{p}_\tau} = \{ G_\mrm{p}(i\omega_k), \mrm{p}_\tau \}_{k=1}^{N}$, obtained at the set of operating points $\mc{P} = \{ \mrm{p}_\tau \}_{\tau=1}^{N_\mrm{loc}} \subset \mbb{P}$. We assume the frequencies are sufficiently dense such that it suffices to check a finite number of discrete points to draw conclusions on the underlying continuous curve. Consider the feedback interconnection in Figure \ref{fig:internal_stability}. The objective is to design a controller $K_p$ such that the following requirements are satisfied.
\begin{enumerate}[{R1)},left=\parindent]
	\item {\label{requirement1}} The closed-loop system in Figure \ref{fig:internal_stability} is internally stable in the local sense for all $p(t) \equiv \mrm{p} \in \mc{P}$.
	\item {\label{requirement2}} The performance channels of the closed-loop system are bounded in the local $\mc{H}_\infty$-norm sense for all $\mrm{p} \in \mc{P}$.
\end{enumerate}

In the next section, a rational controller parameterization is introduced that allows for a specific formulation of internal stability. This forms the basis to develop analysis conditions for internal stability and $\mc{H}_\infty$-performance. The theory is first formulated for $\mrm{p} \in \mbb{P}$ for the sake of generality. This also ensures \ref{requirement1} and \ref{requirement2} for $\mrm{p} \in \mc{P}$. 

\section{Stability and performance analysis}
\label{section:analysis_conditions}
In this section, we develop local LPV stability and performance analysis conditions. These results form the basis for a data-driven synthesis procedure. First, a continuous frequency spectrum $\Omega = \{\mbb{R} \cup \{ \infty \} \}$ is considered, which will be restricted later to a finite frequency grid $\Omega_N = \{ \omega_k \}_{k=1}^{N}$ corresponding to $\mc{D}_{N,\mrm{p}_\tau}$. 

\subsection{Stability}
The selection of input-output (IO) pairs in Figure \ref{fig:internal_stability} corresponds to the problem of internal stability \cite[Chapter 3]{doyle1992feedback}. For a fixed $\mrm{p} \in \mbb{P}$, we define the IO map $T(G_\mrm{p}, K_\mrm{p}) : (r, -d) \mapsto (e, u)$ in Figure \ref{fig:internal_stability} by 
\begin{equation}
\label{eqn:four-block}
	T(G_\mrm{p}, K_\mrm{p}) =
	\begin{bmatrix}
		S_\mrm{p} & S_\mrm{p}G_\mrm{p} \\
		K_\mrm{p}S_\mrm{p} & T_\mrm{p}		
	\end{bmatrix}, 	
\end{equation}
with $S_\mrm{p} = (1+G_\mrm{p}K_\mrm{p})^{-1}$ and $T_\mrm{p} = 1-S_\mrm{p}$. If $G_\mrm{p}, K_\mrm{p} \in \RHinf$, then $T(G_\mrm{p}, K_\mrm{p})$ is internally stable if all elements in the IO map $T(G_\mrm{p}, K_\mrm{p})$, defined by \eqref{eqn:four-block}, are stable. This is implied by $S_\mrm{p} \in \RHinf$ \cite[Chapter 3]{doyle1992feedback}. If $T(G_\mrm{p}, K_\mrm{p}) \in \RHinf$ holds for all $\mrm{p} \in \mbb{P}$, then the closed-loop LPV system is called locally internally stable. 
Internal stability is imporant to prevent hidden pole-zero cancellations. To assess internal stability for unstable $G_\mrm{p}$ or $K_\mrm{p}$, introduce the factorization
\begin{equation}
\label{eqn:Gparameterization}
	G_\mrm{p} = N_{G_\mrm{p}}D_{G_\mrm{p}}^{-1}, \quad \{ N_{G_\mrm{p}}, D_{G_\mrm{p}}\} \in \RHinf.
\end{equation}
The two transfer functions $\{ N_{G_\mrm{p}}, D_{G_\mrm{p}}\}$ are a coprime factorization over $\RHinf$ if there exist two other transfer functions $\{ X_\mrm{p}, Y_\mrm{p}\} \in \RHinf$ such that they satisfy the B\'{e}zout identity
\begin{equation}
\label{eqn:bezout}
	N_{G_\mrm{p}} X_\mrm{p} + D_{G_\mrm{p}}Y_\mrm{p} = 1.
\end{equation}
Consequently, $\{X_\mrm{p}Q, Y_\mrm{p}Q\}$ are coprime iff $Q, Q^{-1} \in \RHinf$.
Correspondingly, $K_\mrm{p}$ admits the coprime factorization
\begin{equation}
\label{eqn:Kparameterization}
	K_\mrm{p} = N_{K_\mrm{p}}D_{K_\mrm{p}}^{-1}, \quad \{ N_{K_\mrm{p}}, D_{K_\mrm{p}}\} \in \RHinf.
\end{equation}
Using these representations, \eqref{eqn:four-block} can be written as
\begin{equation}
\label{eqn:four-block-coprime}
	T(G_\mrm{p}, K_\mrm{p}) = 
	D_{\mrm{p}}^{-1}
	\begin{bmatrix}
		D_{G_\mrm{p}}D_{K_\mrm{p}} & N_{G_\mrm{p}}D_{K_\mrm{p}} \\
		D_{G_\mrm{p}}N_{K_\mrm{p}} & N_{G_\mrm{p}}N_{K_\mrm{p}}
	\end{bmatrix},
\end{equation}
with characteristic equation 
\begin{equation}
\label{eqn:characteristicEquation}
	D_{\mrm{p}} = D_{G_\mrm{p}}D_{K_\mrm{p}} + N_{G_\mrm{p}}N_{K_\mrm{p}}.
\end{equation}
The feedback system is internally stable if and only if $D_{\mrm{p}}^{-1} \in \RHinf$. If we set $N_{K_\mrm{p}} = X_\mrm{p}$ and $D_{K_\mrm{p}} = Y_\mrm{p}$, then the characteristic equation \eqref{eqn:characteristicEquation} equals the B\'{e}zout identity \eqref{eqn:bezout}, thus the feedback system is internally stable as $D_\mrm{p}^{-1} = 1$ and the rest of the terms are stable by design in \eqref{eqn:four-block-coprime}.
 Similarly, the closed-loop LPV system is called locally internally stable if these conditions hold for all $\mrm{p} \in \mbb{P}$.

For the transfer $w \mapsto z$, with $w \in \{r, d\}$ and $z \in \{e,u\}$, let
\begin{equation}
\label{eqn:TSISO_definition}
	T_{z,w}(G_\mrm{p}, K_\mrm{p}) = N_{\mrm{p}}D_{\mrm{p}}^{-1},
\end{equation}
with $\{ N_{\mrm{p}}, D_{\mrm{p}} \} \in \RHinf$ and $T_{z,w}(G_\mrm{p}, K_\mrm{p}) \in \RHinf$, defines the corresponding SISO element of \eqref{eqn:four-block-coprime}. For example, $T_{r,e}(G_\mrm{p}, K_\mrm{p}) = N_{\mrm{p}}D_{\mrm{p}}^{-1}$ with $N_{\mrm{p}} = D_{G_\mrm{p}}D_{K_\mrm{p}}$ defines the sensitivity $S_\mrm{p}$ in \eqref{eqn:four-block} and \eqref{eqn:four-block-coprime}.

The following theorem presents analysis conditions to verify internal stability of a closed-loop system locally, given the plant and controller only. As a special case, \cite[Theorem 1]{rantzer1994} is recovered. Here, the idea of coprime factorizations over $\RHinf$ is used to allow for unstable plants or controllers, while also extending the result to the class of LPV systems.
\begin{theorem}
\label{thm:rantzer1994_coprime}
Let $G_\mrm{p}$ and $K_\mrm{p}$ be as defined in \eqref{eqn:Gparameterization} and \eqref{eqn:Kparameterization}, respectively, and let $D_{\mrm{p}} \in \RHinf$ be as defined in \eqref{eqn:characteristicEquation}. Then the following conditions are equivalent. For all $\mrm{p} \in \mbb{P}$
\begin{enumerate}[{\ref{thm:rantzer1994_coprime}\alph*)},left=\parindent]
	\item {\label{rantzer1994_coprime:stability_coprime}} $D_{\mrm{p}}^{-1} \in \RHinf$.
	\item {\label{rantzer1994_coprime:nonzero}} $D_{\mrm{p}}(s) \neq 0, \, \forall s \in \mbb{C}_+ \cup \mbb{C}_{0} \cup \{ \infty \}$.
	\item {\label{rantzer1994_coprime:posreal_rhinf}} There exists a multiplier $\alpha_\mrm{p} \in \RHinf$ such that 
		\vspace{-1ex}
		\begin{equation*}
		\vspace{-1ex}
			\Re \{ D_{\mrm{p}}(i\omega)\alpha_\mrm{p}(i\omega) \} > 0, \, \forall \omega \in \Omega. 
		\end{equation*} 		
\end{enumerate}
\end{theorem}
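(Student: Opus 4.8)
The plan is to establish the cyclic chain of implications \ref{rantzer1994_coprime:stability_coprime}~$\Rightarrow$~\ref{rantzer1994_coprime:posreal_rhinf}~$\Rightarrow$~\ref{rantzer1994_coprime:nonzero}~$\Rightarrow$~\ref{rantzer1994_coprime:stability_coprime}. Each of the three statements is a frozen, pointwise-in-$\mrm{p}$ property of the single scalar transfer function $D_\mrm{p} \in \RHinf$, so there is nothing genuinely LPV to handle beyond quantifying the underlying complex-analytic argument over all $\mrm{p} \in \mbb{P}$; the real content sits in the LTI core.

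For \ref{rantzer1994_coprime:stability_coprime}~$\Rightarrow$~\ref{rantzer1994_coprime:posreal_rhinf}, I would exhibit the multiplier explicitly by taking $\alpha_\mrm{p} = D_\mrm{p}^{-1}$, which lies in $\RHinf$ by \ref{rantzer1994_coprime:stability_coprime}. Since $D_\mrm{p}^{-1}$ is proper and has no poles on $\bbC_0 \cup \{\infty\}$, $D_\mrm{p}$ is nonzero there, hence $D_\mrm{p}(i\omega)\alpha_\mrm{p}(i\omega) = 1$ for all $\omega \in \freqset$ and $\Re \{ D_\mrm{p}(i\omega)\alpha_\mrm{p}(i\omega) \} = 1 > 0$. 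This is also where the announced connection to the B\'ezout identity \eqref{eqn:bezout} appears: the scalar relation $D_\mrm{p}\alpha_\mrm{p} = 1$ is a one-dimensional B\'ezout-type identity, degenerating to $\alpha_\mrm{p} = 1$ exactly when $N_{K_\mrm{p}} = X_\mrm{p}$ and $D_{K_\mrm{p}} = Y_\mrm{p}$, the trivial-stability case noted around \eqref{eqn:characteristicEquation}.

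The substance of the theorem is \ref{rantzer1994_coprime:posreal_rhinf}~$\Rightarrow$~\ref{rantzer1994_coprime:nonzero}, which I would argue by a Nyquist / argument-principle computation. Set $f_\mrm{p} := D_\mrm{p}\alpha_\mrm{p}$; since $D_\mrm{p}, \alpha_\mrm{p} \in \RHinf$ we have $f_\mrm{p} \in \RHinf$, so $f_\mrm{p}$ is real-rational, analytic and bounded on $\bbC_+$, continuous on $\overline{\bbC_+}$, with finite value $f_\mrm{p}(\infty)$; thus $\Gamma_\mrm{p} := \{ f_\mrm{p}(i\omega) : \omega \in \bbR \} \cup \{ f_\mrm{p}(\infty) \}$ is a closed curve. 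The hypothesis forces $\Gamma_\mrm{p} \subset \{ z \in \bbC : \Re \{ z \} > 0 \}$, an open simply connected set avoiding the origin, so $\Gamma_\mrm{p}$ has winding number zero about $0$; by the argument principle the number of zeros of $f_\mrm{p}$ in $\bbC_+$, counted with multiplicity, equals that winding number and is therefore zero. Together with $\Re \{ f_\mrm{p}(i\omega) \} > 0$ on $\bbC_0 \cup \{\infty\}$ this yields $f_\mrm{p}(s) \neq 0$ for all $s \in \bbC_+ \cup \bbC_0 \cup \{\infty\}$, and since $\alpha_\mrm{p}$ is finite on that region, a zero of $D_\mrm{p}$ there would force a zero of $f_\mrm{p}$, a contradiction; hence \ref{rantzer1994_coprime:nonzero}. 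I expect the main obstacle to be making this step watertight: one must check that the Nyquist contour genuinely closes at infinity — which uses that $f_\mrm{p}$ is proper and rational, so the large semicircle shrinks to the single point $f_\mrm{p}(\infty)$ — and that $f_\mrm{p}$ does not vanish anywhere on that contour, which is exactly why $\infty$ is built into $\Omega$ and the inequality in \ref{rantzer1994_coprime:posreal_rhinf} is strict.

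Finally, \ref{rantzer1994_coprime:nonzero}~$\Rightarrow$~\ref{rantzer1994_coprime:stability_coprime} is the standard characterization of invertibility in $\RHinf$: $D_\mrm{p}$ is proper, real-rational, and free of poles in $\overline{\bbC_+}$; by \ref{rantzer1994_coprime:nonzero} it has no zeros in $\bbC_+ \cup \bbC_0$ and $D_\mrm{p}(\infty) \neq 0$, so it is biproper, whence $D_\mrm{p}^{-1}$ is proper and real-rational with poles only at the zeros of $D_\mrm{p}$, all of which have negative real part; therefore $D_\mrm{p}^{-1} \in \RHinf$, closing the cycle.
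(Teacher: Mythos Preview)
Your proposal is correct and follows essentially the same route as the paper: for \ref{rantzer1994_coprime:stability_coprime}~$\Rightarrow$~\ref{rantzer1994_coprime:posreal_rhinf} both take $\alpha_\mrm{p} = D_\mrm{p}^{-1}$ so that $D_\mrm{p}\alpha_\mrm{p} \equiv 1$, and for the reverse direction both use that $\Re\{D_\mrm{p}(i\omega)\alpha_\mrm{p}(i\omega)\} > 0$ on $\Omega$ forces $D_\mrm{p}\alpha_\mrm{p}$ to be biproper with no closed-RHP zeros. The only differences are organizational: the paper handles \ref{rantzer1994_coprime:stability_coprime}~$\Leftrightarrow$~\ref{rantzer1994_coprime:nonzero} by citation and then proves \ref{rantzer1994_coprime:stability_coprime}~$\Leftrightarrow$~\ref{rantzer1994_coprime:posreal_rhinf} directly (concluding $D_\mrm{p}^{-1} = \alpha_\mrm{p}(D_\mrm{p}\alpha_\mrm{p})^{-1} \in \RHinf$), whereas you run the cyclic chain through \ref{rantzer1994_coprime:nonzero} and make the Nyquist/argument-principle step explicit where the paper simply asserts ``no RHP zeros''.
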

\begin{proof}
	For a proof of equivalence between \ref{rantzer1994_coprime:stability_coprime} and \ref{rantzer1994_coprime:nonzero}, see \cite[Chapter 3]{doyle1992feedback}. Regarding the equivalence between \ref{rantzer1994_coprime:stability_coprime} and \ref{rantzer1994_coprime:posreal_rhinf} for all $\mrm{p} \in \mbb{P}$, note the following reasoning:
	
	$(\Rightarrow)$ Assume \ref{rantzer1994_coprime:stability_coprime} and let $Q = D_{\mrm{p}}^{-1}$. This implies that the B\'ezout identity \eqref{eqn:bezout} is satisfied for $X_\mrm{p} = N_{K_\mrm{p}}Q$ and $Y_\mrm{p} = D_{K_\mrm{p}}Q$. Hence, \ref{rantzer1994_coprime:posreal_rhinf} is satisfied by setting $\alpha_\mrm{p} = Q$ because $\Re \{ N_{G_\mrm{p}} X_\mrm{p} + D_{G_\mrm{p}}Y_\mrm{p} \} = 1$ for all  $\omega \in \Omega$.
	
	$(\Leftarrow)$ Assume \ref{rantzer1994_coprime:posreal_rhinf} and let $V = D_{\mrm{p}}\alpha_\mrm{p}$. Note that $V, V^{-1} \in \RHinf$ because \ref{rantzer1994_coprime:posreal_rhinf} implies that $D_{\mrm{p}}\alpha_\mrm{p}$ is bi-proper and has no right half-plane (RHP) zeros. Then $D_{\mrm{p}} = V\alpha_\mrm{p}^{-1}$ satisfies the B\'{e}zout identity \eqref{eqn:bezout}, therefore $D_{\mrm{p}}^{-1} \in \RHinf$. Thus \ref{rantzer1994_coprime:posreal_rhinf} implies \ref{rantzer1994_coprime:stability_coprime} and consequently \ref{rantzer1994_coprime:nonzero}. This completes the proof.	
\end{proof}
\begin{remark}
\vspace{-1ex}
	A direct result of Theorem \ref{thm:rantzer1994_coprime} is that $\alpha_\mrm{p}^{-1} \in \RHinf$. This is easy to prove because
	\begin{enumerate}[i)]
		\item There does not exist a strictly proper $\alpha_\mrm{p} \in \RHinf$ such that \ref{rantzer1994_coprime:posreal_rhinf} holds. Indeed \ref{rantzer1994_coprime:posreal_rhinf} is violated at $\omega = \infty$.
		\item There does not exist an $\alpha_\mrm{p} \in \RHinf$ with $\alpha_\mrm{p}^{-1} \notin \RHinf$ such that \ref{rantzer1994_coprime:posreal_rhinf} holds. This can be seen as $\alpha_\mrm{p}^{-1} \notin \RHinf$ implies that there exists some RHP zero $s_0$ such that $\alpha_\mrm{p}(s_0) = 0$. Consequently, there exists some frequency $\omega_0$ such that $\Re \{ D_{\mrm{p}}(i\omega_0)\alpha_\mrm{p}(i\omega_0) \} < 0$ and \ref{rantzer1994_coprime:posreal_rhinf} is violated. 
	\end{enumerate} 
\end{remark}

Theorem \ref{thm:rantzer1994_coprime} gives an analysis condition that provides a local stability result for the closed-loop system if instead of a parametric model, $N_{G_\mrm{p}}$ and $D_{G_\mrm{p}}$ are only given in terms of local frequency-domain data. 
The next subsection presents the extension towards a performance analysis condition.

\subsection{Performance}
In this subsection, analysis conditions to assess locally the $\mc{H}_\infty$-performance of an LPV system, given the plant and controller only, are presented. This constitutes contribution \ref{Contribution:4}. To derive performance analysis conditions, the main loop theorem is of importance and is presented first.

Consider the transfer function $T_{z,w}(G_\mrm{p}, K_\mrm{p}) \in \RHinf$ of interest in Figure \ref{fig:PKForm}, such that $w \mapsto z : T_{z,w}(G_\mrm{p}, K_\mrm{p})$, and let ${ \hat{\Delta} \in \mathbf{B\hat{\Delta}} }$, with
\begin{equation}
\mathbf{B\hat{\Delta}} := \Set{\hat{\Delta} \in \RHinf}{\lvert \hat{\Delta}(i\omega) \rvert < 1, \, \forall \omega \in \Omega}
\end{equation}
a fictitious uncertainty, represent the $\mc{H}_\infty$-performance criterion. Then, the $\mc{H}_\infty$-performance of the system in Figure \ref{fig:PKForm} is equivalent to Figure \ref{fig:MLBD} \cite[Theorem 8.7]{skogestad2001MFC}. This is stated in terms of the following theorem, where the weighting filter $W_T$ is introduced to specify the frequency-dependent design requirements on the map $w \mapsto z$.
\begin{theorem}[Main loop theorem]
\label{thm:main_loop}
	Let $W_T \in \RHinf$ and $T_{z,w}(G_\mrm{p}, K_\mrm{p})$ be defined as in \eqref{eqn:TSISO_definition}. The following statements are equivalent. For all $\mrm{p} \in \mbb{P}$
	\begin{enumerate}[{\ref{thm:main_loop}\alph*)},left=\parindent]
		\item {\label{thm:main_loop_a}} $\underset{\omega \in \Omega}{\sup} \, \lvert  W_T(i\omega) T_{z,w}(G_\mrm{p}, K_\mrm{p})(i\omega) \rvert \leq \gamma$.
		\item {\label{thm:main_loop_b}} $ \begin{aligned}[t] 1 - \gamma^{-1}W_T(i\omega) T_{z,w}(G_\mrm{p}, K_\mrm{p})(i\omega)\hat{\Delta}(i\omega) \neq 0, \\ \forall \omega \in \Omega, \, \forall \hat{\Delta} \in \mathbf{B\hat{\Delta}}. \end{aligned}$
	\end{enumerate}
\end{theorem}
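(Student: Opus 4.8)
The plan is to prove Theorem~\ref{thm:main_loop} by recognizing it as a direct instance of the standard small-gain/main-loop argument applied to the scalar (SISO) setting, where the ``structured'' uncertainty block $\hat\Delta$ is just a single complex number at each frequency. First I would fix an arbitrary $\mrm{p} \in \mbb{P}$ and an arbitrary frequency $\omega \in \Omega$, and abbreviate $M(i\omega) := \gamma^{-1} W_T(i\omega) T_{z,w}(G_\mrm{p}, K_\mrm{p})(i\omega)$, which is a well-defined complex number since $W_T, T_{z,w}(G_\mrm{p},K_\mrm{p}) \in \RHinf$ (including the limit at $\omega = \infty$). With this notation, condition \ref{thm:main_loop_a} reads $|M(i\omega)| \leq 1$ for all $\omega \in \Omega$, and condition \ref{thm:main_loop_b} reads $1 - M(i\omega)\hat\Delta(i\omega) \neq 0$ for all $\omega \in \Omega$ and all $\hat\Delta \in \mathbf{B\hat\Delta}$. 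The whole statement then reduces to the pointwise-in-$\omega$ scalar fact: for a complex number $m$, one has $|m| \leq 1$ if and only if $1 - m\delta \neq 0$ for every complex $\delta$ with $|\delta| < 1$.

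The core of the proof is that elementary complex-analysis equivalence, which I would argue in both directions. For $(\ref{thm:main_loop_a} \Rightarrow \ref{thm:main_loop_b})$: if $|M(i\omega)| \leq 1$ then for any $\hat\Delta \in \mathbf{B\hat\Delta}$ we have $|M(i\omega)\hat\Delta(i\omega)| \leq |M(i\omega)|\,|\hat\Delta(i\omega)| < 1$, so $M(i\omega)\hat\Delta(i\omega)$ lies strictly inside the unit disk and therefore cannot equal $1$; hence $1 - M(i\omega)\hat\Delta(i\omega) \neq 0$, and since $\omega$ and $\hat\Delta$ were arbitrary this gives \ref{thm:main_loop_b}. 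For the contrapositive of $(\ref{thm:main_loop_b} \Rightarrow \ref{thm:main_loop_a})$: suppose \ref{thm:main_loop_a} fails, i.e.\ there is some $\omega_0 \in \Omega$ with $|M(i\omega_0)| > 1$. Then I would construct a destabilizing perturbation by choosing the constant $\hat\Delta(s) \equiv \bar{m}_0 / |m_0|^2$ where $m_0 = M(i\omega_0)$; this is a real-rational $\RHinf$ function (a constant) with $|\hat\Delta(i\omega)| = 1/|m_0| < 1$ for all $\omega$, so $\hat\Delta \in \mathbf{B\hat\Delta}$, and $1 - m_0 \hat\Delta(i\omega_0) = 1 - m_0 \bar{m}_0/|m_0|^2 = 1 - 1 = 0$, contradicting \ref{thm:main_loop_b}. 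This exhibits the required block and completes the equivalence.

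The one point that needs a little care — and the likely main obstacle — is the handling of the endpoints and degenerate cases: the frequency $\omega = \infty$ must be included consistently (covered by working with $\RHinf$ functions, which have well-defined limits there), the constant perturbation $\hat\Delta$ used in the converse must genuinely lie in $\mathbf{B\hat\Delta}$ (which it does, strictly, because $1/|m_0| < 1$), and the case $M(i\omega) = 0$ in the forward direction is trivial since then $1 - 0 = 1 \neq 0$. I would also note that the supremum in \ref{thm:main_loop_a} being $\leq \gamma$ is exactly the pointwise bound $|M(i\omega)| \leq 1$ for all $\omega$, so no separate argument about attainment of the supremum is needed. Finally, I would remark that this is precisely the scalar specialization of \cite[Theorem~8.7]{skogestad2001MFC}, with the trivial structured-singular-value computation $\mu_{\hat\Delta}(m) = |m|$ for a one-by-one full block, so the theorem inherits the standard main-loop interpretation used subsequently to convert the $\mc{H}_\infty$-performance requirement into a robust-stability-type condition amenable to the FRF-data formulation.
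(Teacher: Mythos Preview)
The paper does not give its own proof of Theorem~\ref{thm:main_loop}; it simply states that the result is a special case of \cite[Theorem~11.7]{zhou1996robust}. Your proposal instead supplies a direct, self-contained small-gain argument, which is a reasonable and more informative choice for the reader. The forward direction $(\ref{thm:main_loop_a}\Rightarrow\ref{thm:main_loop_b})$ and the overall structure are correct.

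There is, however, a genuine gap in your converse construction. You set $\hat\Delta(s)\equiv \bar m_0/|m_0|^2 = 1/m_0$ with $m_0 = M(i\omega_0)$ complex, and claim this constant lies in $\RHinf$. In the paper, $\RHinf$ is explicitly the set of \emph{real-rational} proper stable transfer functions, so a nonreal constant is not admissible and your candidate $\hat\Delta$ is generally not in $\mathbf{B\hat\Delta}$. This is precisely the step where the standard references do extra work: given the required complex value $\delta_0 = 1/m_0$ with $|\delta_0|<1$ at a frequency $\omega_0$, one constructs a real-rational $\hat\Delta\in\RHinf$ with $\hat\Delta(i\omega_0)=\delta_0$ and $\sup_{\omega}|\hat\Delta(i\omega)|=|\delta_0|<1$, typically by combining a real scalar with an all-pass factor of the form $(a-s)/(a+s)$ (or a product thereof) to match the phase of $\delta_0$ at $\omega_0$; see, e.g., the interpolation constructions in \cite[Chapter~8]{zhou1996robust} or \cite[Section~8.5]{skogestad2001MFC}. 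You should replace the complex constant by this standard real-rational interpolant (and briefly address the boundary cases $\omega_0\in\{0,\infty\}$, where the phase of $M$ is $0$ or $\pi$ and a real constant already suffices). With that fix your argument goes through and is equivalent to the cited result, just specialized to the scalar case.
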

Theorem \ref{thm:main_loop} is a special case of \cite[Theorem 11.7]{zhou1996robust}.
\begin{remark}
	By Theorem \ref{thm:main_loop}, nominal performance can be seen as a special case of robust stability, where a fictitious uncertainty is connected to the performance channel, see Figure \ref{fig:MLBD}.
\end{remark}
\begin{figure}[]
	\centering
	\begin{subfigure}[b]{0.45\columnwidth}
		\centering
		\includegraphics[scale=1.00]{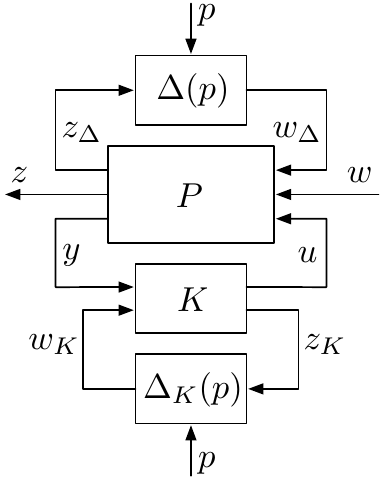}
		\caption{}
		\label{fig:PKForm}
	\end{subfigure}
	\hfill
	\begin{subfigure}[b]{0.45\columnwidth}
		\centering
		\includegraphics[scale=1.00]{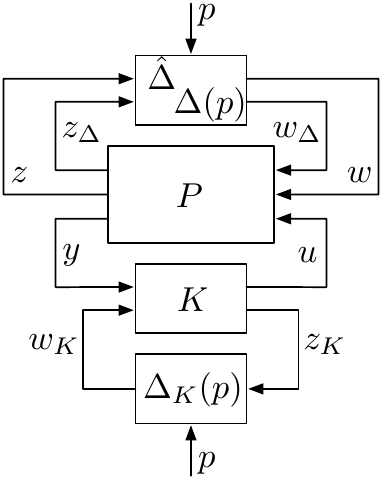}
		\caption{}
		\label{fig:MLBD}
	\end{subfigure}
	\caption{Generalized LPV plant (a); and performance of the SISO closed-loop map $w \mapsto z$ (b).}
	\vspace{-3ex}
\end{figure}

In the data-driven setting, the absence of a parametric model of $T_{z,w}(G_\mrm{p}, K_\mrm{p})$ makes it difficult to turn \ref{thm:main_loop_b} into a convex constraint as it is generally done in LPV synthesis approaches for gain-scheduling \cite{hoffmann2015survey}. Hence, in that case \ref{thm:main_loop_b} is needed to be evaluated for an infinite set of realizations of the fictitious uncertainty $\hat{\Delta}$, for example, as in \cite{van2018frequency}. The contribution in this paper is to utilize Theorem \ref{thm:rantzer1994_coprime} together with Theorem \ref{thm:main_loop} to derive a single condition to analyze both stability and performance without the need to sample $\hat{\Delta}$.
\begin{theorem}
\label{thm:performance_analysis}
	Let $W_T \in \RHinf$ and $T_{z,w}(G_\mrm{p}, K_\mrm{p})$ be defined as in \eqref{eqn:TSISO_definition}. Requirements \ref{requirement1} and \ref{requirement2} are satisfied if and only if there exists a multiplier $\alpha_\mrm{p} \in \RHinf$ such that
	\begin{equation}
	\begin{split}
	\label{eqn:performance_analysis}
		\Re \{ (D_{\mrm{p}}(i\omega) - \gamma^{-1}\lvert W_T(i\omega)N_{\mrm{p}}(i\omega) \rvert)\alpha_\mrm{p}(i\omega) \} > 0, \\
		\forall \omega \in \Omega, \, \forall \mrm{p} \in \mbb{P}.
	\end{split}
	\end{equation} 
\end{theorem}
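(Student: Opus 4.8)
The plan is to obtain \eqref{eqn:performance_analysis} by reading \ref{requirement2} as a robust stability statement via Theorem~\ref{thm:main_loop}, and then discharging it with the same positive-real multiplier device used for \ref{requirement1} in Theorem~\ref{thm:rantzer1994_coprime}. Since \ref{requirement1} is exactly $T_{z,w}(G_\mrm{p},K_\mrm{p})=N_\mrm{p}D_\mrm{p}^{-1}\in\RHinf$, Theorem~\ref{thm:main_loop} lets me replace \ref{requirement1}--\ref{requirement2} by the following: for all $\mrm{p}\in\mbb{P}$, $D_\mrm{p}^{-1}\in\RHinf$ and
\[
1-\gamma^{-1}W_T(i\omega)T_{z,w}(G_\mrm{p},K_\mrm{p})(i\omega)\hat{\Delta}(i\omega)\neq 0,\quad\forall\omega\in\Omega,\ \forall\hat{\Delta}\in\mathbf{B\hat{\Delta}},
\]
where $\hat{\Delta}\equiv 0$ recovers \ref{requirement1} and the remaining $\hat{\Delta}$ encode \ref{requirement2}; concretely, under \ref{requirement1} this is just $\lvert W_T(i\omega)N_\mrm{p}(i\omega)\rvert\leq\gamma\lvert D_\mrm{p}(i\omega)\rvert$ on $\Omega$.

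First I would factor out the coprime denominator: $1-\gamma^{-1}W_T N_\mrm{p}D_\mrm{p}^{-1}\hat{\Delta}=D_\mrm{p}^{-1}(D_\mrm{p}-\gamma^{-1}W_T N_\mrm{p}\hat{\Delta})$, so that, using $D_\mrm{p}^{-1}\in\RHinf$, the condition becomes that $D_\mrm{p}^{\hat{\Delta}}:=D_\mrm{p}-\gamma^{-1}W_T N_\mrm{p}\hat{\Delta}\in\RHinf$ has no zeros on $\Omega$ for every $\hat{\Delta}\in\mathbf{B\hat{\Delta}}$. This $D_\mrm{p}^{\hat{\Delta}}$ is precisely the characteristic equation \eqref{eqn:characteristicEquation} of the loop in Figure~\ref{fig:MLBD} with the fictitious feedback $-\gamma^{-1}W_T\hat{\Delta}$ closed around the $w\mapsto z$ channel. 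Since $\mathbf{B\hat{\Delta}}$ is path-connected and contains $\hat{\Delta}\equiv 0$, at which $D_\mrm{p}^{\hat{\Delta}}=D_\mrm{p}$ has zero winding number about the origin, an argument-principle/homotopy argument upgrades ``no zeros on $\Omega$ for all $\hat{\Delta}$'' to ``$(D_\mrm{p}^{\hat{\Delta}})^{-1}\in\RHinf$ for all $\hat{\Delta}\in\mathbf{B\hat{\Delta}}$'', i.e.\ to robust internal stability of the augmented loop (the remark after Theorem~\ref{thm:main_loop}).

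It then remains to characterize this robust stability by a single multiplier. For the ``only if'' direction I would simply take $\alpha_\mrm{p}:=D_\mrm{p}^{-1}\in\RHinf$: then $\Re\{(D_\mrm{p}-\gamma^{-1}\lvert W_T N_\mrm{p}\rvert)\alpha_\mrm{p}\}=1-\gamma^{-1}\lvert W_T N_\mrm{p}\rvert\Re\{D_\mrm{p}^{-1}\}\geq 1-\gamma^{-1}\lvert W_T N_\mrm{p}\rvert\lvert D_\mrm{p}\rvert^{-1}\geq 0$ on $\Omega$ by \ref{requirement2}, which becomes strict after an arbitrarily small tightening of $\gamma$. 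For the ``if'' direction I would run the chain backwards: given $\alpha_\mrm{p}$ as in \eqref{eqn:performance_analysis}, use Theorem~\ref{thm:rantzer1994_coprime}\ref{rantzer1994_coprime:posreal_rhinf} with this $\alpha_\mrm{p}$ as multiplier to conclude $(D_\mrm{p}^{\hat{\Delta}})^{-1}\in\RHinf$ for every $\hat{\Delta}\in\mathbf{B\hat{\Delta}}$ --- the modulus term $\gamma^{-1}\lvert W_T N_\mrm{p}\rvert$ in \eqref{eqn:performance_analysis} being the effect of the worst-case unimodular $\hat{\Delta}$ that aligns $W_T N_\mrm{p}\hat{\Delta}\alpha_\mrm{p}$ with the positive real axis --- and then read off \ref{requirement1} (from $\hat{\Delta}\equiv 0$) and \ref{requirement2} (from Theorem~\ref{thm:main_loop}), the B\'ezout-identity step being copied from the proof of Theorem~\ref{thm:rantzer1994_coprime}.

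The step I expect to be the main obstacle is exactly this last certification: the worst-case $\hat{\Delta}$ is only unimodular on $\Omega$ and is \emph{not} in $\RHinf$, so Theorem~\ref{thm:rantzer1994_coprime} does not literally apply to $D_\mrm{p}^{\hat{\Delta}}$ at that $\hat{\Delta}$, and a priori each $\hat{\Delta}\in\mathbf{B\hat{\Delta}}$ only comes with its own multiplier, so one must show a single $\alpha_\mrm{p}$ suffices for the whole open ball. I would resolve this either by proving the multiplier characterization directly for the imaginary-axis ``perturbed characteristic function'' $D_\mrm{p}-\gamma^{-1}\lvert W_T N_\mrm{p}\rvert$ --- its statement involves only values on $\Omega$ and hence is still meaningful --- or by a limiting argument over $\hat{\Delta}$ in the open ball combined with a normal-families/continuity argument to carry the multiplier to the limit; the discrepancy between the strict inequality in \eqref{eqn:performance_analysis} and the closed bound $\leq\gamma$ in \ref{requirement2} is a minor additional point, covered by the same $\gamma$-tightening.
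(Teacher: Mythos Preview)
Your plan is essentially the paper's proof. Both arguments (i) invoke Theorem~\ref{thm:main_loop} to recast \ref{requirement2} as the non-vanishing of $1-\gamma^{-1}W_T T_{z,w}\hat{\Delta}$ over $\mathbf{B\hat{\Delta}}$, (ii) clear the denominator $D_\mrm{p}$ to obtain the perturbed characteristic function $D_\mrm{p}-\gamma^{-1}W_T N_\mrm{p}\hat{\Delta}$, (iii) use a homotopy/argument-principle step to connect this to Condition~\ref{rantzer1994_coprime:nonzero}--\ref{rantzer1994_coprime:posreal_rhinf} of Theorem~\ref{thm:rantzer1994_coprime}, and (iv) pass to the modulus $\lvert W_T N_\mrm{p}\rvert$ via a limiting argument over the open ball. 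The obstacle you single out --- that the extremal $\hat{\Delta}$ is unimodular and not in $\RHinf$, and that a priori each $\hat{\Delta}$ carries its own multiplier --- is precisely what the paper addresses by working on the closed scaled ball $\mathbf{B}_\epsilon\mathbf{\hat{\Delta}}=\{\hat{\Delta}:\lvert\hat{\Delta}\rvert\le 1-\epsilon\}$, reducing to its boundary, and then letting $\epsilon\to 0$; your ``limiting argument over $\hat{\Delta}$ in the open ball'' is the same device. The only cosmetic difference is that for the ``only if'' direction you exhibit the concrete multiplier $\alpha_\mrm{p}=D_\mrm{p}^{-1}$ and handle the non-strictness by a $\gamma$-tightening, whereas the paper obtains the multiplier implicitly through the equivalence \ref{rantzer1994_coprime:stability_coprime}$\Leftrightarrow$\ref{rantzer1994_coprime:posreal_rhinf} and absorbs the strict/non-strict discrepancy into the same $\epsilon\to 0$ limit.
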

\begin{proof}
	Requirement \ref{requirement2} can be equivalently stated using Theorem \ref{thm:main_loop}, Condition \ref{thm:main_loop_b}, i.e.,
	\begin{equation}
	\begin{split}
	\label{eqn:proof_performance_analysis_1}
			1 - \gamma^{-1}W_T(i\omega)T_{z,w}(G_\mrm{p}, K_\mrm{p})(i\omega)\hat{\Delta}(i\omega) \neq 0, \\ 
			\forall \omega \in \Omega, \, \forall \mrm{p} \in \mbb{P}, \, \forall \hat{\Delta} \in \mathbf{B\hat{\Delta}}.
	\end{split}
	\end{equation}
	As $D_\mrm{p} \in \RHinf$, $D_\mrm{p}(i\omega)\neq 0,$ $\forall \omega \in \Omega$ and by multiplying \eqref{eqn:proof_performance_analysis_1} with it, the resulting non-singularity condition is:
	\begin{equation}
	\begin{split}
	\label{eqn:proof_performance_analysis_1b}
			D_{\mrm{p}}(i\omega) - \gamma^{-1} W_T(i\omega)N_{\mrm{p}}(i\omega) \hat{\Delta}(i\omega)\neq 0, \\ 
			\forall \omega \in \Omega, \, \forall \mrm{p} \in \mbb{P}, \, \forall \hat{\Delta} \in \mathbf{B\hat{\Delta}}.
	\end{split}
	\end{equation}
	Based on a homotopy argument, \eqref{eqn:proof_performance_analysis_1b} corresponds to Condition 1b) in Theorem \ref{thm:rantzer1994_coprime}, which through 1c) is equivalent with
	\begin{equation}
	\label{eqn:proof_performance_analysis_3}
	\begin{split}
		\Re \{ (D_{\mrm{p}}(i\omega) - \gamma^{-1} W_T(i\omega)N_{\mrm{p}}(i\omega) \hat{\Delta}(i\omega))\alpha_\mrm{p}(i\omega) \} \! > \! 0, \\ 
		\forall \omega \in \Omega, \, \forall \mrm{p} \in \mbb{P}, \, \hat{\Delta} \in \mathbf{B\hat{\Delta}}.
	\end{split}
	\end{equation}
	When $\hat{\Delta} = 0 \in \mathbf{B\hat{\Delta}}$, \eqref{eqn:proof_performance_analysis_3} reduces to $\Re \{ D_{\mrm{p}}(i\omega)\alpha_\mrm{p}(i\omega) \} > 0$, which is the same as Condition \ref{rantzer1994_coprime:posreal_rhinf} in Theorem \ref{thm:rantzer1994_coprime}, hence \eqref{eqn:proof_performance_analysis_3} implies requirement \ref{requirement1}.
	
	Let $1\geq \epsilon>0$ and consider  \eqref{eqn:proof_performance_analysis_3} on
	\begin{equation}
\mathbf{B}_\epsilon \mathbf{\hat{\Delta}} := \Set{\hat{\Delta} \in \RHinf}{\lvert \hat{\Delta}(i\omega) \rvert \leq 1-\epsilon, \, \forall \omega \in \Omega},
\end{equation}
	which is the scaled closed uncertainty ball contained in $\mathbf{B\hat{\Delta}}$. Since any $\hat{\Delta} \in \mathbf{B}_\epsilon \mathbf{\hat{\Delta}}$ represents a rotation and contraction in the complex plane, it is necessary and sufficient to check \eqref{eqn:proof_performance_analysis_3} on the boundary only, i.e., for $\hat{\Delta} \in \partial  \mathbf{B}_\epsilon \mathbf{\hat{\Delta}}$, with $\lvert \hat \Delta(i\omega) \rvert = 1-\epsilon$, $\forall \omega \in \Omega$. 
Note that, in \eqref{eqn:proof_performance_analysis_3}, $W_T(i\omega)N_{\mrm{p}}(i\omega)$ only represents complex scaling of this ball which is centered at $D_{\mrm{p}}(i\omega)$. Hence, \eqref{eqn:proof_performance_analysis_3} restricted on $\mathbf{B}_\epsilon \mathbf{\hat{\Delta}}$ is equivalent with 
	\begin{multline}
		\Re \{ (D_{\mrm{p}}(i\omega) - \gamma^{-1}(1-\epsilon)\lvert W_T(i\omega)N_{\mrm{p}}(i\omega) \rvert)\alpha_\mrm{p}(i\omega) \} > 0, \\
		\forall \omega \in \Omega, \, \forall \mrm{p} \in \mbb{P}.
	\label{scaled_eq}
	\end{multline} 
	This means that if \eqref{scaled_eq} holds, then violation of \eqref{eqn:proof_performance_analysis_3} can only happen in $\mathbf{B\hat{\Delta}} \setminus \mathbf{B}_\epsilon \mathbf{\hat{\Delta}}$. As \eqref{scaled_eq} is continuous in $\epsilon$, by taking the limit $\epsilon \rightarrow 0$, 
	$\mathbf{B\hat{\Delta}} \setminus \mathbf{B}_\epsilon \mathbf{\hat{\Delta}}\rightarrow \emptyset$ and we obtain that \eqref{eqn:performance_analysis} is equivalent with \eqref{eqn:proof_performance_analysis_3}.
\end{proof}

Theorem \ref{thm:performance_analysis} states that the performance condition \ref{thm:main_loop_a} is satisfied if and only if for each frequency $\omega \in \Omega$ and scheduling value $\mrm{p} \in \mbb{P}$ the disks with radius $\gamma^{-1}\lvert W_T N_{\mrm{p}} \rvert$, centered at $D_{\mrm{p}}$, do not include the origin. This holds if there exists $\alpha_\mrm{p} \in \RHinf$, representing for each frequency a line passing through the origin, that does not intersect with the disks, see Figure \ref{fig:performance_illustration}. The analysis condition is especially useful as it provides a local stability and performance result given a controller and the data $\mc{D}_{N,\mrm{p}_\tau}$. 

If the fFRFs are subject to model uncertainty, robust stability and performance have to be taken into account \cite{karimi2018robust}.
\begin{figure}[]
	\centering
	\includegraphics[scale=1.00]{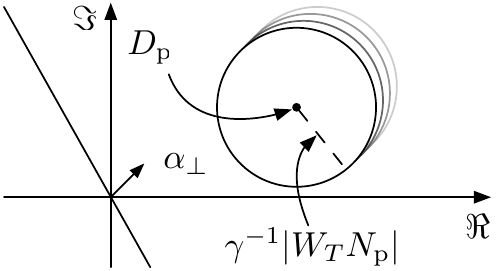}
	\caption{Illustration of stability and $\mc{H}_\infty$-performance. The transfer function $\alpha_\mrm{p}$ represents, for each frequency, a line passing through the origin. If this line does not intersect with the disks $D_{\mrm{p}} - \gamma^{-1}\lvert W_TN_{\mrm{p}} \lvert$, then the disks exclude the origin and \eqref{eqn:proof_performance_analysis_1} must hold.}
	\label{fig:performance_illustration}
	\vspace{-3ex}
\end{figure}
\subsection{Synthesis}
It turns out that it is possible to give an equivalent formulation of Theorem \ref{thm:performance_analysis} which enables controller synthesis.
\begin{theorem}
\label{thm:synthesis}
	Given $G_\mrm{p} = N_{G_\mrm{p}}D_{G_\mrm{p}}^{-1}$, with $\{ N_{G_\mrm{p}}, D_{G_\mrm{p}} \} \in \RHinf$ coprime, as defined in \eqref{eqn:Gparameterization}, and a weighting filter $W_T \in \RHinf$, the following statements are equivalent.
	\begin{enumerate}[{\ref{thm:synthesis}\alph*)},left=\parindent]
		\item {\label{thm:karimi_a}} There exists a proper rational controller $K_\mrm{p}$ that achieves internal stability and performance as defined in requirements \ref{requirement1} and \ref{requirement2}, respectively.
		\item {\label{thm:karimi_b}} There exists a controller $K_\mrm{p} = N_{K_\mrm{p}}D_{K_\mrm{p}}^{-1}$, with $\{ N_{K_\mrm{p}}, D_{K_\mrm{p}} \} \in \RHinf$, as defined in \eqref{eqn:Kparameterization}, such that
			\begin{equation}
			\begin{split}
			\label{eqn:synthesis_constraint}
				\Re \{ D_{\mrm{p}}(i\omega) \} > \gamma^{-1}\lvert W_T(i\omega)N_{\mrm{p}}(i\omega) \rvert, \\
				\forall \omega \in \Omega,  \, \forall \mrm{p} \in \mbb{P}.
			\end{split}
			\end{equation}
	\end{enumerate}
\end{theorem}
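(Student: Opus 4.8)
The plan is to prove the two implications separately, in each case by exhibiting an explicit coprime factorization and invoking Theorem~\ref{thm:performance_analysis}. The key observation is that the synthesis inequality \eqref{eqn:synthesis_constraint} is precisely the analysis condition \eqref{eqn:performance_analysis} specialized to the multiplier $\alpha_\mrm{p}\equiv1\in\RHinf$; the entire content of the theorem is therefore that the multiplier supplied by Theorem~\ref{thm:performance_analysis} can always be absorbed into the controller factorization without altering the controller itself.

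For \ref{thm:karimi_b}~$\Rightarrow$~\ref{thm:karimi_a} I would take the given $K_\mrm{p}=N_{K_\mrm{p}}D_{K_\mrm{p}}^{-1}$, with $\{N_{K_\mrm{p}},D_{K_\mrm{p}}\}\in\RHinf$ satisfying \eqref{eqn:synthesis_constraint}, set $\alpha_\mrm{p}\equiv1$, note that \eqref{eqn:performance_analysis} then holds verbatim, and conclude \ref{requirement1} and \ref{requirement2} directly from Theorem~\ref{thm:performance_analysis}. The only auxiliary point is that $K_\mrm{p}$ is a genuine proper rational controller with well-posed loop; this follows because $N_{K_\mrm{p}},D_{K_\mrm{p}}$ are proper and real-rational and, evaluating \eqref{eqn:synthesis_constraint} at $\omega=\infty$, $D_\mrm{p}(\infty)\neq0$, which together with strict properness of the plant (as for the CMG) excludes an algebraic loop. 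I expect this direction to be essentially immediate.

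For \ref{thm:karimi_a}~$\Rightarrow$~\ref{thm:karimi_b} I would start from a proper rational $K_\mrm{p}$ achieving \ref{requirement1} and \ref{requirement2} and fix any coprime factorization $K_\mrm{p}=N_{K_\mrm{p}}D_{K_\mrm{p}}^{-1}$ over $\RHinf$. Theorem~\ref{thm:performance_analysis} then provides a multiplier $\alpha_\mrm{p}\in\RHinf$ satisfying \eqref{eqn:performance_analysis}, and by the Remark following Theorem~\ref{thm:rantzer1994_coprime} one also has $\alpha_\mrm{p}^{-1}\in\RHinf$, i.e.\ $\alpha_\mrm{p}$ is a unit in $\RHinf$. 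The reparameterized factorization $\bar N_{K_\mrm{p}}:=N_{K_\mrm{p}}\alpha_\mrm{p}$, $\bar D_{K_\mrm{p}}:=D_{K_\mrm{p}}\alpha_\mrm{p}$ still lies in $\RHinf$ and realizes the same $K_\mrm{p}$. A short computation from \eqref{eqn:characteristicEquation} gives $\bar D_\mrm{p}=D_\mrm{p}\alpha_\mrm{p}$, and since the numerator $N_\mrm{p}$ of each of the four closed-loop channels in \eqref{eqn:four-block-coprime} contains exactly one controller factor, likewise $\bar N_\mrm{p}=N_\mrm{p}\alpha_\mrm{p}$. Reading \eqref{eqn:performance_analysis} in the geometric form of Figure~\ref{fig:performance_illustration} --- the disk centered at $D_\mrm{p}$ with radius $\gamma^{-1}\lvert W_T N_\mrm{p}\rvert$ lies strictly on the positive side of the line $\alpha_\mrm{p}$, i.e.\ $\Re\{D_\mrm{p}\alpha_\mrm{p}\}>\gamma^{-1}\lvert W_T N_\mrm{p}\alpha_\mrm{p}\rvert$ --- and substituting the rescaled quantities yields exactly $\Re\{\bar D_\mrm{p}\}>\gamma^{-1}\lvert W_T\bar N_\mrm{p}\rvert$, which is \eqref{eqn:synthesis_constraint} for $(\bar N_{K_\mrm{p}},\bar D_{K_\mrm{p}})$; hence \ref{thm:karimi_b}.

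I expect the main obstacle to be the bookkeeping in this last translation step: one must verify that (i) multiplying both controller factors by $\alpha_\mrm{p}$ leaves $K_\mrm{p}$, and hence the closed loop and its internal stability, unchanged; (ii) $\alpha_\mrm{p}\in\RHinf$ keeps the new factors in $\RHinf$ while $\alpha_\mrm{p}^{-1}\in\RHinf$ preserves properness and well-posedness of the loop; and (iii) the magnitude $\lvert\alpha_\mrm{p}\rvert$ is carried correctly through $N_\mrm{p}\mapsto N_\mrm{p}\alpha_\mrm{p}$, so that $\alpha_\mrm{p}$ genuinely cancels between the two sides of \eqref{eqn:synthesis_constraint} rather than merely rotating one of them. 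Making precise the geometric reading of \eqref{eqn:performance_analysis} --- disk-excludes-origin rather than the literal pointwise inequality written there --- is the one place where a brief argument, rather than a pure substitution, is needed.
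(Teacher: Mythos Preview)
Your proposal is correct and follows essentially the same route as the paper: specialize $\alpha_\mrm{p}\equiv1$ for \ref{thm:karimi_b}$\Rightarrow$\ref{thm:karimi_a}, and absorb the multiplier supplied by Theorem~\ref{thm:performance_analysis} into the controller factorization for \ref{thm:karimi_a}$\Rightarrow$\ref{thm:karimi_b}. Your explicit bookkeeping on the absorption step---in particular the need for the geometric disk reading of \eqref{eqn:performance_analysis} so that $\lvert\alpha_\mrm{p}\rvert$ rather than $\Re\{\alpha_\mrm{p}\}$ multiplies the right-hand side---is more detailed than the paper's one-line ``consequently \ref{thm:karimi_b} holds'' and is precisely the content of the Remark placed immediately after Theorem~\ref{thm:synthesis}.
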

\begin{proof}
	$(\Rightarrow)$ Assume $K_\mrm{p} = \tilde{N}_{K_\mrm{p}} \tilde{D}_{K_\mrm{p}}^{-1}$ satisfies \ref{thm:karimi_a}. Then, by Theorem \ref{thm:performance_analysis}, there exists an $\alpha_\mrm{p} \in \RHinf$ such that \eqref{eqn:performance_analysis} holds. Choosing $N_{K_\mrm{p}} = \tilde{N}_{K_\mrm{p}}\alpha_\mrm{p}$, $D_{K_\mrm{p}} = \tilde{D}_{K_\mrm{p}}\alpha_\mrm{p}$ results in $K_\mrm{p} = N_{K_\mrm{p}}D_{K_\mrm{p}}^{-1} = \tilde{N}_{K_\mrm{p}}\tilde{D}_{K_\mrm{p}}^{-1}$ and consequently \ref{thm:karimi_b} holds.
	
	$(\Leftarrow)$ Assume \ref{thm:karimi_b} holds. Because $D_\mrm{p} \in \RHinf$ and $D_{\mrm{p}}(i\omega)$ is positive for all $\omega \in \Omega$, $\{N_{K_\mrm{p}}, D_{K_\mrm{p}}\}$ form B\'ezout factors for $\{ N_{G_\mrm{p}}, D_{G_\mrm{p}}\}$. Thus by Theorem \ref{thm:rantzer1994_coprime}, $D_{\mrm{p}}^{-1} \in \RHinf$ and $K_\mrm{p}$ internally stabilizes $G_\mrm{p}$ and \ref{requirement1} holds. By Theorem \ref{thm:performance_analysis}, requirement \ref{requirement2} holds. This completes the proof.
\end{proof}

Theorem \ref{thm:synthesis} presents a local $\mc{H}_\infty$-optimal controller synthesis condition given only data $\mc{D}_{N,\mrm{p}_\tau}$. This is further developed in Section \ref{section:synthesis}, where an optimization problem is formulated and the controller parameterization is discussed.
\begin{remark}
		Theorem \ref{thm:synthesis} shows that the multiplier $\alpha_\mrm{p}$ can be absorbed into the controller as $\gamma^{-1} \lvert W_T(i\omega)N_{\mrm{p}}(i\omega) \alpha_\mrm{p}(i\omega) \rvert \Rightarrow \Re \{ \gamma^{-1} \lvert W_T(i\omega)N_{\mrm{p}}(i\omega) \rvert \alpha_\mrm{p}(i\omega) \}$. Note that the absorbed multiplier changes the considered $N_{\mrm{p}}$ and $D_{\mrm{p}}$, but $\alpha_\mrm{p}$ cancels out when $K_\mrm{p} = N_{K_\mrm{p}}D_{K_\mrm{p}}^{-1}$ is computed. The price to be paid for this absorption is the increased order of $N_{K_\mrm{p}}$ and $D_{K_\mrm{p}}$.
\end{remark}
\begin{remark}
	\cite[Theorem 1]{karimi2018robust} is recovered in the special case when the plant and controller are LTI . 
\end{remark}
\section{Controller synthesis}
\label{section:synthesis}
In this section, we build upon the stability and performance analysis and synthesis conditions derived in Section \ref{section:analysis_conditions} by developing a procedure to synthesize LPV controllers. This forms Contribution C1). First, an optimization problem is set up in Section \ref{subsection:synthesisOptimization} that characterizes the synthesis problem based on Theorem \ref{thm:synthesis}. This is followed by a discussion on the controller parameterization in Section \ref{subsection:Kparameterization} and implementation aspects in Section \ref{subsection:Kimplementation}.

\subsection{Controller synthesis}
\label{subsection:synthesisOptimization}
Given the data $\mc{D}_{N, \mrm{p}_\tau}$ and a controller parameterization $K_\mrm{p} = N_{K_\mrm{p}}D_{K_\mrm{p}}^{-1}$, given in the Section \ref{subsection:Kparameterization}, an optimization problem is formulated satisfying requirements \ref{requirement1} and \ref{requirement2}.
\begin{equation}
\label{eqn:synthesis}
\begin{aligned}
	& \underset{\theta, \gamma}{\text{min}}
	& & \gamma \\
	& \text{s.t.}
	& & \gamma \Re \{ D_{\mrm{p}}(i\omega, \theta) \} > \lvert W_T(i\omega)N_{\mrm{p}}(i\omega, \theta) \rvert \\
	& & & \forall \omega \in \Omega, \, \mrm{p} \in \mc{P}
\end{aligned}
\end{equation}
where $\theta$ are the controller parameters. 

The optimization problem \eqref{eqn:synthesis} is in general non-convex. However, through a linear parameterization of the controller, \eqref{eqn:synthesis} becomes a quasi-convex optimization problem in the controller parameters $\theta$ and the performance indicator $\gamma$. To solve the quasi-convex program, a bisection algorithm over $\gamma$ is utilized. This results in an iterative approach, where for every fixed value of $\gamma$, a second-order cone program is solved. 

To provide stability and performance guarantees, the constraints in \eqref{eqn:synthesis} need to be satisfied for all $\omega \in \Omega$, which is an infinite set, leading to a semi-infinite program. One solution is to solve \eqref{eqn:synthesis} for a finite set of frequencies $\Omega_N = \{ \omega_k \}_{k=1}^{N} \subset \Omega$. The frequency set can be chosen randomly, according to the scenario approach \cite{calafiore2006scenario}. This allows for the computation of confidence bounds on the constraints. In the data-driven setting this choice is spared from the user as the data is only available at a pre-specified set of frequency points. Either of these methods result in a quasi-convex second-order cone program and can be solved as described above.

\subsection{Controller parameterization}
\label{subsection:Kparameterization}

An orthonormal basis function (OBF)-based representation \cite{toth2010modeling} is a natural choice to parameterize the controller factors
\begin{subequations}
\begin{align}
\label{eqn:Nparameterization}
	N_{K_\mrm{p}}(s) &= {\textstyle \sum_{i=0}^{n_N}} \, w_i(\mrm{p})\phi_i(s), \\
	\label{eqn:Dparameterization}
	D_{K_\mrm{p}}(s) &= {\textstyle \sum_{i=0}^{n_D}} \, v_i(\mrm{p})\varphi_i(s),
\end{align}
\end{subequations}
Here, $\{ \phi_i \}_{i=0}^{n_N}$ and $\{ \varphi_i \}_{i=0}^{n_D}$ with $\phi_0 = \varphi_0 = 1$ and $n_D \geq n_N$ are the sequence of basis functions, with coefficient functions 
\begin{align}
	w_i(\mrm{p}) = {\textstyle \sum_{\ell=1}^m} \, \breve{w}_{i}^{\ell} \psi_{\ell}(\mrm{p}), && v_i(\mrm{p}) = {\textstyle \sum_{\ell=1}^m} \, \breve{v}_{i}^{\ell} \psi_{\ell}(\mrm{p}).
\end{align}
Here, the coefficient functions are formed through a chosen functional dependence, e.g., affine, polynomial or rational, characterized by the basis functions $\{ \psi_\ell \}_{\ell=1}^{m}$. See \cite[Chapter 9.2]{toth2010modeling} for an overview of OBF based LPV model structures. The OBF controller parameterization enables tuning of both the poles and zeros of the controller, in contrast to previous data-driven frequency-domain LPV tuning methods \cite{kunze2007gain,karimi2013hinf,bloemers2019towards_lpv_synthesis}. Additional controller requirements are discussed in \cite{bloemers2021FrfLpvSyn}.
\begin{figure}[]
	\centering
	\includegraphics[scale=1.00]{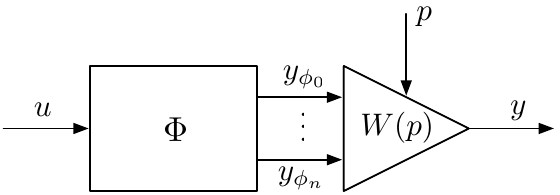}
	\caption{Input-output graph of the Wiener LPV OBF structure.}
	\label{fig:WienerOBF}
	\vspace{-3ex}
\end{figure}%

Local aspects of \eqref{eqn:Nparameterization}-\eqref{eqn:Dparameterization} can be preserved by considering a time-domain Wiener LPV OBF realization
\begin{subequations}
\begin{align}
\label{eqn:NparameterizationTime}
	y_{N_{K_p}}(t) &= {\textstyle \sum_{i=0}^{n_N}} \, w_i(p(t))y_\phi(t), \\
	\label{eqn:DparameterizationTime}
	y_{D_{K_p}}(t) &= {\textstyle \sum_{i=0}^{n_D}} \, v_i(p(t))y_\phi(t),
\end{align}
\end{subequations}
with $y_\phi = \Phi u$. The parameterization of $N_{K_p}$ and $D_{K_p}$ can be viewed as a bank of OBFs, whose output is weighted with parameter-dependent coefficient functions, see Figure \ref{fig:WienerOBF}.

Equations \eqref{eqn:NparameterizationTime}-\eqref{eqn:DparameterizationTime} reveal that requirements (i)-(iv) are satisfied. Requirement (v) is satisfied, w.l.o.g. by $\{ \breve{v}_{i}^{\ell} \}_{i=0}^{\ell} = \{1,0, \dots, 0 \}$. Because the set of bases is complete w.r.t. $\mc{H}_2$, hence any solution including the optimal solution of \eqref{eqn:synthesis} can be found via parameterizations \eqref{eqn:Nparameterization}--\eqref{eqn:Dparameterization} \cite{karimi2018robust}.
\begin{remark}
	Note that \eqref{eqn:NparameterizationTime} and \eqref{eqn:DparameterizationTime} depend on time-varying $p$ and characterize the global behavior of the factors $N_{K_p}$ and $D_{K_p}$. The concept in this paper is to tune the parameter-dependent coefficient functions based on their local behavior, i.e., \eqref{eqn:Nparameterization}-\eqref{eqn:Dparameterization} for constant $\mrm{p}$, in-line with the data $\mc{D}_{N, \mrm{p}_\tau}$.
\end{remark}

Algorithm \ref{alg:basisFunctionSelection} presents the selection of optimal OBFs, based on the Kolmogorov $n$-Width theory. Given a desired number of poles, an optimal set of OBFs is selected based on Fuzzy Kolmogorov c-Max (FKcM) clustering of the poles, such that the decay rate of the OBFs is minimized \cite[Chapter 8]{toth2010modeling}. 
\begin{algorithm}[]
	\caption{Basis function selection}
	\label{alg:basisFunctionSelection}
	\small
	{Choose arbitrary bases $\{ \phi_i \}_{i=0}^{n_N}$ and $\{ \varphi_i \}_{i=0}^{n_D}$, solve \eqref{eqn:synthesis} for $\theta = \{\{ \breve{w}_i^\ell \}_{\ell = 1}^m \cup \{ \breve{v}_i^\ell \}_{\ell = 1}^m \}$ and compute $\hat{N}_{K_\mrm{p}}$ and $\hat{D}_{K_\mrm{p}}$.} \\
	
	{Given $\hat{N}_{K_\mrm{p}}$ and $\hat{D}_{K_\mrm{p}}$, compute the corresponding local pole and zero variations of the controller. Choose new bases $\{\phi_i\}$ and $\{ \varphi_i \}$ based on FKcM clustering of the poles and zeros.} \\
	
	{Solve \eqref{eqn:synthesis} and compute $N_{K_\mrm{p}}$ and $D_{K_\mrm{p}}$.} \\
	
	{Stop if a desired performance and order of the bases has been achieved, otherwise go to step 2.}
\end{algorithm}

\subsection{Controller implementation}
\label{subsection:Kimplementation}
The OBF parameterizations admit a linear fractional representation (LFR). In this representation, the dependency on the scheduling variable $p$ is extracted by formulating \eqref{eqn:NparameterizationTime} and \eqref{eqn:DparameterizationTime} in terms of LTI systems, denoted $\mc{N}$ and $\mc{D}$, such that $N_{K_p} = \mc{F}_u(\mc{N}, \Delta_\mc{N}(p))$ and $D_{K_p}^{-1} = \mc{F}_u(\mc{D}^{-1}, \Delta_\mc{D}(p))$, respectively, where $\mc{F}_u$ is the upper linear fractional transformation \cite{zhou1996robust}, see Figure \ref{fig:KLFR1}. The inverse $\mc{D}^{-1}$ is obtained through partial inversion of the IO map, see, e.g., \cite[Chapter 10]{zhou1996robust}. The controller is formed through the series connection of the LFRs $\mc{N}$ and $\mc{D}^{-1}$, resulting in the LFR $\mc{K}$ such that $K_p = \mc{F}_u(\mc{K}, \mrm{diag}(\Delta_\mc{N}, \Delta_\mc{D}) )$, see Figure \ref{fig:KLFR2}.

\begin{figure}[t]
	\centering
	\begin{subfigure}[]{0.60\columnwidth}
		\centering
		\includegraphics[scale=1.00]{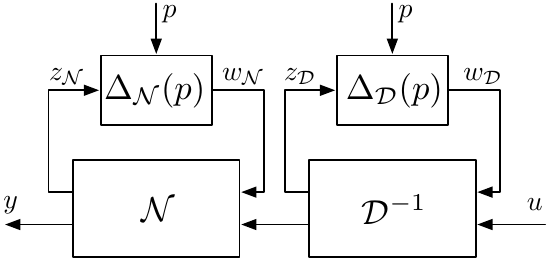}
		\caption{}
		\label{fig:KLFR1}
	\end{subfigure}
	\hfill
	\begin{subfigure}[]{0.37\columnwidth}
		\centering
		\includegraphics[scale=1.00]{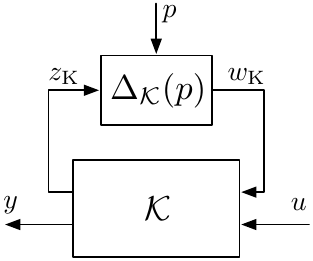}
		\caption{}
		\label{fig:KLFR2}
	\end{subfigure}
	\caption{Controller realization through (a) the series connection of the LFRs $N_{K_p} = \mc{F}_u(\mc{N}, \Delta_\mc{N}(p))$ and $D_{K_p} = \mc{F}_u(\mc{D}^{-1}, \Delta_\mc{D}(p))$ and (b) $K_p = \mc{F}_u(\mc{K}, \Delta_\mc{K})$, with $\Delta_\mc{K} = \mrm{diag}(\Delta_\mc{N},\Delta_\mc{D})$.}
	\vspace{-3ex}
\end{figure}

\section{Control design for the CMG}
\label{section:experimentalResults}
In this section, a controller is designed and implemented on the CMG. Although the theory in this paper is presented in continuous-time, with this example we show that a discrete-time application is possible. The system identification and controller design are performed at a sampling rate of $200$ Hz.

\subsection{Frequency-domain measurements}
\label{subsection:identification}
As described in Section \ref{section:ProblemFormulation}, the dynamics of the CMG are dependent on the velocity of the disk. It is therefore natural to consider the velocity $\dot{q}_1(t) = p(t)$ as a scheduling variable. The disk velocity operates in the range $\mbb{P} = [30, 50]$ rad/s. To identify the local behavior at different disk velocities, an equidistant grid $\mc{P} = \{30, 40, 50 \}$ is chosen. As the gyroscope is inherently an unstable system, the measurements are performed in closed-loop using a stabilizing LTI controller. 

The coprime factors $N_{G_\mrm{p}}(i\omega)$ and $D_{G_\mrm{p}}(i\omega)$ can be calculated from the estimates of the process sensitivity $S_\mrm{p}G_\mrm{p}$ and sensitivity $S_\mrm{p}$, respectively \cite{karimi2018robust}. This is achieved by estimating the fFRF of the mappings $d \mapsto y$ and $d \mapsto u_G$, respectively, in Figure \ref{fig:internal_stability}. During a closed-loop experiment, the system is excited by a white-noise disturbance signal $d$. The position of gimbal $A$ is measured with an optical encoder. Data records with a length of $240000$ samples are collected for each operating point $\mrm{p} \in \mc{P}$. 

The obtained fFRFs are estimated using the empirical transfer function estimate, using a Hanning window, and contain $1000$ frequency points per operating point. Figure \ref{fig:Gbode} shows the estimated fFRFs $G_\mrm{p}$. The figure highlights that the system is subject to a relatively high noise level, which has a significant effect at higher frequencies. The scheduling dependency is also clear to see, which manifests in terms of a shift in the resonance frequencies and the low-frequency gain.
\begin{figure}[]
	\centering
	\includegraphics[scale=1.00]{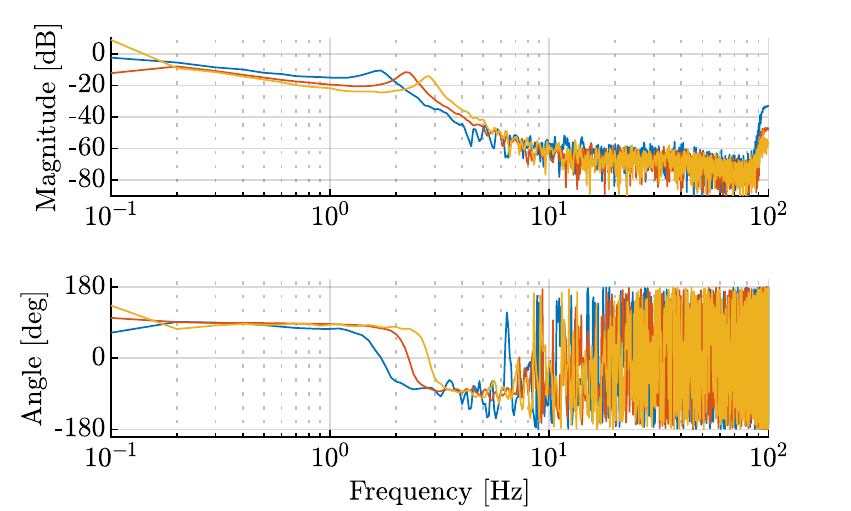}
	\caption{Bode plot of the estimated fFRFs of $G_\mrm{p}$ for the three grid points $\mrm{p} \in \mc{P} = \{30,40,50\}$ in blue, orange and yellow, respectively. A parameter-varying shift in resonance frequencies and low-frequency gain is observed.}
	\label{fig:Gbode}
	\vspace{-3ex}
\end{figure}

\subsection{Data-driven controller synthesis}
\label{subsection:controllerImplementation}
The goal is to control the position $q_4$ of gimbal $A$ by actuating gimbal $C$ through torque $\tau_2$. To highlight the parameter dependence, the objective is to track a reference signal subject to variations in the disk velocity. To specify this objective in terms of control design, consider the full 4-block shaping problem in Figure \ref{fig:internal_stability}. Based on the fFRFs in Figure \ref{fig:Gbode}, the first resonance occurs at $1.7$ Hz. The shaping filters are designed such that a bandwidth of $0.75$ Hz is achieved. The sensitivity is shaped to provide a lower bound on the bandwidth and to limit the overshoot by providing an upper bound of $6$ dB for higher frequencies. Integral action is desired to achieve 0 steady-state error. To suppress the effects of measurement noise while also limiting high-frequent control actions, a high-frequent roll-off is enforced into the controller by shaping the control and complementary sensitivities. Shaping the complementary sensitivity also provides an upper bound on the achieved  bandwidth. The process sensitivity is restricted to lie below $0$ dB to limit the amplification of disturbances. 

Using the approach presented in this paper, an LPV and LTI controller are synthesized, for which the results are given in Figures \ref{fig:localResults} and \ref{fig:Kbode}. Both controllers are parameterized by discrete-time Laguerre bases of orders $n_K = n_D = 5$ with pole $z = 0.7$. The LPV controller has affine scheduling-dependance, and the LTI controller is scheduling-independent. The achieved performance levels are $\gamma_\mathrm{LPV} = 1.2097$ and $\gamma_\mathrm{LTI} = 3.1792$. The LTI controller does not meet the performance criteria for all operating points and, therefore has to sacrifice performance in order to achieve robust performance. The LPV controller achieves good performance for the considered operating space by compensating for the parameter-dependent low-frequency gain and resonance behavior.
\begin{figure}[]
	\centering
	\includegraphics[scale=1.00]{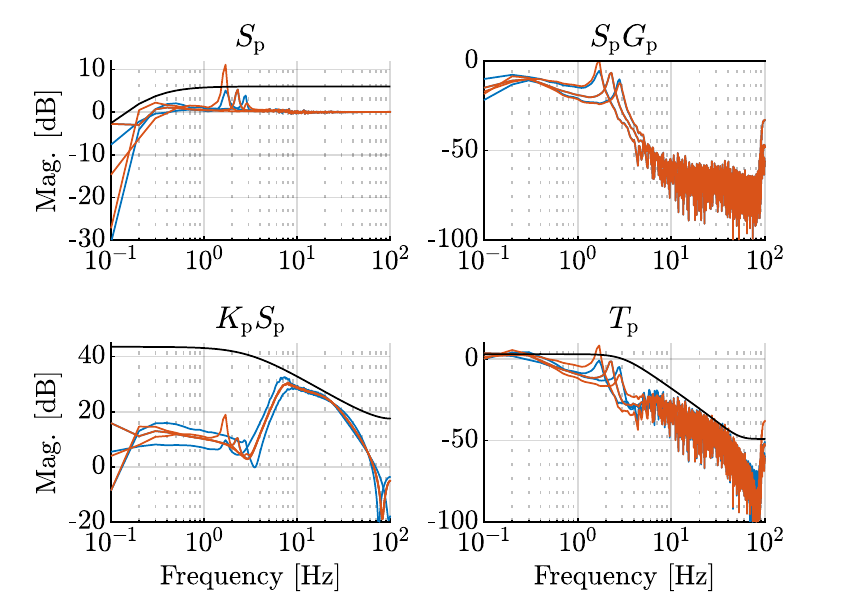}
	\caption{Magnitude plots of the fFRFs of the 4-block \eqref{eqn:four-block}. The LPV and LTI designs are shown in blue and orange, respectively. The weighting filters are shown in black. The LTI controller design does not meet the performance specifications.}
	\label{fig:localResults}
	\vspace{-3ex}
\end{figure}
\begin{figure}[]
	\centering
	\includegraphics[scale=1.00]{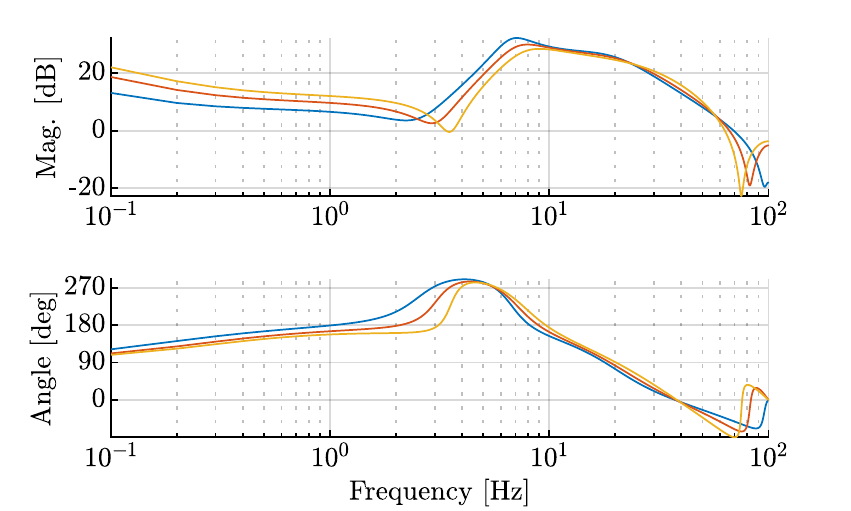}
	\caption{Magnitude and phase plots of the LPV controller for frozen scheduling-values $\mc{P} = \{30, 40, 50\}$ in blue orange and yellow, respectively. The LPV controller compensates the parameter-varying low-frequency gain and resonance behavior observed in Figure \ref{fig:Gbode}.}
	\label{fig:Kbode}
	\vspace{-3ex}
\end{figure}

\subsection{Results}
\label{subsection:experimentalResults}
First, the tracking performance is evaluated locally, when the scheduling variable operates at constant velocities $\mc{P} = \{30, 40, 50 \}$ rad/s. Figure \ref{fig:expLocalRef} shows the measured step responses using the designed LPV and LTI controllers. The main differences are observed for $\mrm{p}=30$ and $\mrm{p}=50$ rad/s. At $\mrm{p}=30$ rad/s, the step response shows a significant oscillation when using the LTI controller. This oscillation corresponds to the resonance frequency at $1.7$ Hz in Figure \ref{fig:localResults} and it is significantly larger compared to the LPV case. For $\mrm{p}=50$ rad/s, a slightly higher bandwidth is achieved when using the LPV controller, which corresponds to a faster rise and settling time. Finally, the responses when using the LPV controller are very consistent, with only a small variation in settling time.
\begin{figure}[]
	\centering
	\includegraphics[scale=1.00]{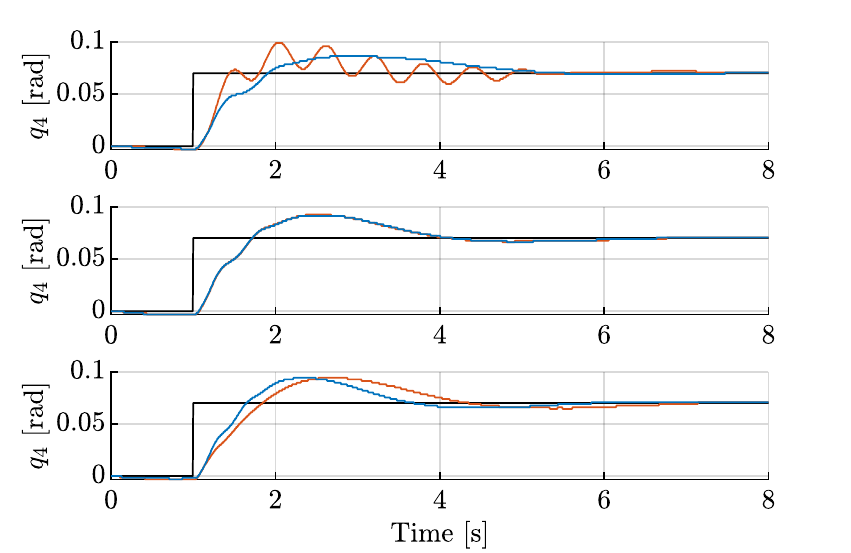}
	\caption{Measured local step responses of the CMG for constant scheduling variables $\mc{P} = \{30, 40, 50\}$ rad/s, from top to bottom, respectively. The angle of gimbal $A$ is shown in blue and orange when using the LPV and LTI controller, respectively. The LTI design loses performance for $\mrm{p} \in \{30, 50\}$ rad/s, whereas the LPV design displays consistent results for the considered operating points.}
	\label{fig:expLocalRef}
	\vspace{-3ex}
\end{figure}

Next, the performance is evaluated for a time-varying scheduling variable. A Square wave reference signals, filtered with a 3rd order low-pass filter with a cut-off frequency of $0.7$ Hz, are used to challenge the system. The amplitude of the reference is $15^\circ$. The scheduling variable, i.e. the disk velocity, tracks a similar, but faster square wave trajectory in the range $\mbb{P} = [30, 50]$ rad/s. Implementation of the controller is done according to the LFR representation described in Section \ref{subsection:Kimplementation}, where the controller is scheduled at each sampling interval.

Figure \ref{fig:expResults} shows the reference signal, tracking performance, scheduling variation and control effort for the designed LPV and LTI controllers. The results indicate that the LPV controller performs significantly better than the LTI controller. A reduction in overshoot and settling time are observed. More specifically, we obtain a $39 \%$ and $33 \%$ decrease between the $\ell_2$ and $\ell_\infty$ norms of the error signals, respectively. These results experimentally validate the capabilities of the proposed control methodology, including the benefit of using an LPV controller over an LTI controller for the CMG. However, it is imperative to note that stability and performance guarantees are provided only locally. Hence, stability and performance of the nonlinear system can only be guaranteed for sufficiently slow variations of the scheduling variable.
\begin{figure}[t]
	\centering
	\includegraphics[scale=1.00]{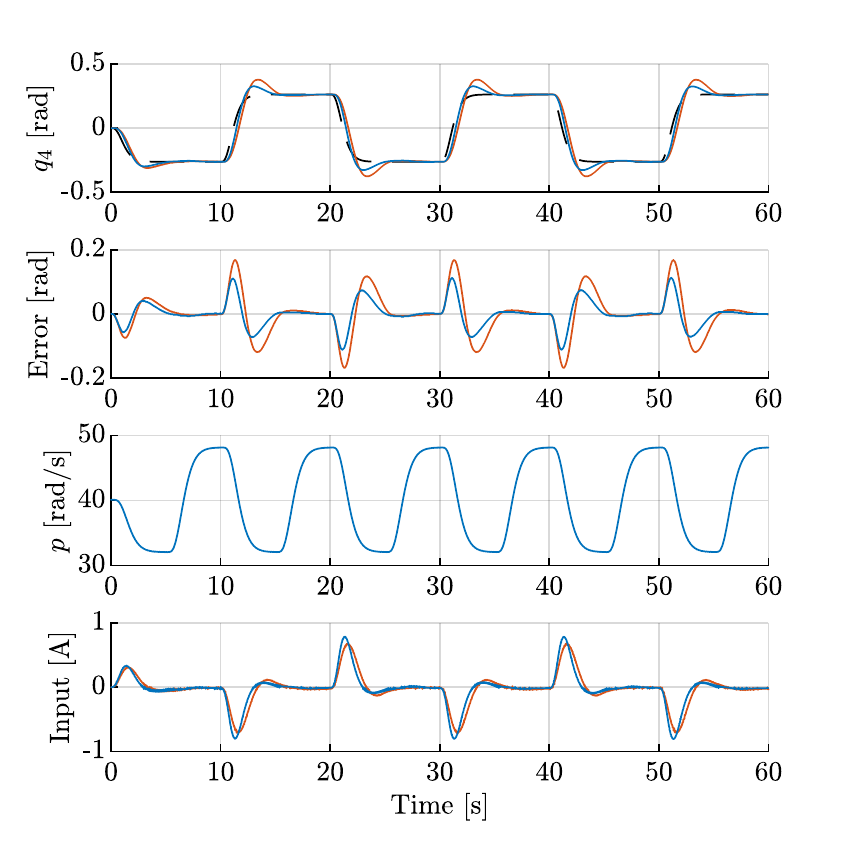}
	\caption{Experimental results of the CMG. The top figure shows the reference (black) and the angle of gimbal $A$ when using the LPV and LTI controllers in blue and orange, respectively. The other figures shows the error, scheduling and input signals. The LPV design significantly improves the performance by decreasing the overshoot.}
	\label{fig:expResults}
	\vspace{-3ex}
\end{figure}

\section{Conclusion}
\label{section:conclusions}
The LPV controller synthesis approach in this paper enables the design of operating condition-dependent controllers directly from frequency-domain data. Experimental demonstrations on a control moment gyroscope show that significant increase in performance can be achieved via the proposed approach for operating condition-dependent systems. In comparison to existing methods in the literature, this approach enables the design of rational LPV controllers, for which local stability and performance analysis certificates are provided. Future research aims at global stability and performance guarantees.


%

%
%
%
%
%




\bibliographystyle{IEEEtran}
\bibliography{IEEEabrv,TCST2020}{}

\end{document}